\documentclass[11pt]{article}
\usepackage[margin=1in]{geometry}
\usepackage[T1]{fontenc}
\usepackage{lmodern,microtype,amsmath,amssymb,amsthm,mathtools,bm}
\usepackage{booktabs,array,longtable,enumitem,xcolor}
\newcolumntype{P}[1]{>{\raggedright\arraybackslash}p{#1}}
\usepackage[colorlinks=true,linkcolor=blue!45!black,citecolor=blue!45!black,urlcolor=blue!55!black]{hyperref}
\hypersetup{pdftitle={Minimax Quantum State Tomography in a Known Trace-Distance Ball},pdfsubject={Local minimax estimation with collective quantum measurements},pdfkeywords={quantum state tomography, trace distance, minimax risk, sample complexity}}
\usepackage{fancyhdr}
\setlist{topsep=4pt,itemsep=3pt}
\newtheorem{theorem}{Theorem}[section]
\newtheorem{lemma}[theorem]{Lemma}
\newtheorem{proposition}[theorem]{Proposition}
\newtheorem{corollary}[theorem]{Corollary}
\theoremstyle{definition}
\theoremstyle{remark}\newtheorem{remark}[theorem]{Remark}
\newcommand{\Tr}{\operatorname{Tr}}
\newcommand{\E}{\mathbb E}
\newcommand{\Prb}{\mathbb P}
\newcommand{\cD}{\mathcal D}
\newcommand{\cB}{\mathcal B}
\newcommand{\cR}{\mathcal R}
\newcommand{\cQ}{\mathcal Q}
\newcommand{\norm}[1]{\left\lVert #1\right\rVert}
\newcommand{\ket}[1]{\lvert #1\rangle}
\newcommand{\bra}[1]{\langle #1\rvert}
\newcommand{\ip}[2]{\langle #1,#2\rangle}
\newcommand{\op}{\mathrm{op}}
\newcommand{\HS}{\mathrm{HS}}
\newcommand{\vecop}{\operatorname{vec}}
\DeclareMathOperator{\diag}{diag}
\DeclareMathOperator{\GammaDist}{Gamma}

\title{Minimax Quantum State Tomography\\in a Known Trace-Distance Ball}
\author{Yonglong Li\\School of Mathematics and Statistics\\ Xi'an Jiaotong University\\ liyonglong@xjtu.edu.cn}
\date{}
\begin{document}
\maketitle
\begin{abstract}
We study quantum state tomography when the unknown state lies in a trace-distance ball of radius $R$ around a known reference state $\rho$. For arbitrary collective measurements, we determine the minimax expected risk and confidence radius up to universal constants, uniformly over the dimension, center spectrum, radius, and sample size. Write $\Gamma=d(\Tr\sqrt\rho)^2+d^2R$. For $d\ge2$, $0<R\le1$, and $n\ge1$ copies, the expected risk is $\Theta(\min\{R,\sqrt{\Gamma/n}\})$. At failure probability $0<\delta\le1/3$, the minimax confidence radius is
\[
 \Theta\!\left(\min\left\{R,\sqrt{\frac{\Gamma+\log(1/\delta)}{n}}\right\}\right).
\]
For $\varepsilon\le c_0R$ with a universal $c_0>0$, the copy complexity is $\Theta((\Gamma+\log(1/\delta))/\varepsilon^2)$. For the lower bound we use a weighted trace-norm witness to retain all spectral scales in a square-root perturbation family. For the upper bound, we give a direct spectral and concentration analysis of a modified existing purification-based procedure.
\end{abstract}
\noindent\textbf{Index terms:} Quantum state tomography, local minimax estimation, trace distance, collective measurements, sample complexity.

\section{Introduction}\label{sec:introduction}

Suppose that a preparation procedure is designed to produce a known quantum state $\rho$, and that calibration or a model gives the deterministic promise
\begin{equation}\label{eq:introball}
 T(\sigma,\rho)\le R
\end{equation}
for the state $\sigma$ actually produced. Here $T(\sigma,\rho)=\frac12\norm{\sigma-\rho}_1$ is the trace distance, which controls the difference between the probabilities of every measurement event~\cite{Watrous}. The task is to reconstruct $\sigma$ from independent copies. The promise is assumed as part of the model; verifying it is a separate statistical problem. 

Returning $\rho$ gives error at most $R$ without using any copies. To improve on this estimate, one must learn the variations allowed by the ball. These variations depend on more than the rank of the center. Even a small ball around a pure state contains both coherent changes of direction and mixed states with mass on previously unoccupied directions. We ask for the minimax risk uniformly over the dimension, the entire spectrum of the center, the radius, and the sample size.

For unrestricted collective measurements, global trace-distance tomography has copy complexity $\Theta(d^2/\varepsilon^2)$ at constant confidence, and the corresponding rank-at-most-$r$ rate is $\Theta(rd/\varepsilon^2)$~\cite{OW,HHJWY,SSW}. These results do not by themselves give a matching lower bound in an arbitrary prescribed neighborhood. Our main result identifies its coefficient as
\begin{equation}\label{eq:introGamma}
 \Gamma(\rho,R)=d(\Tr\sqrt\rho)^2+d^2R.
\end{equation}
The first term retains all spectral scales of the center, while the second accounts for additional states admitted by the radius. Theorem~\ref{thm:main} gives the expected risk and the confidence radius for every sample size; Corollary~\ref{thm:confidence} gives the resulting high-accuracy copy complexity.

The main contribution is the matching local converse for arbitrary spectra, including zero and arbitrarily small eigenvalues. We regularize the center within the ball and perturb its square root by a full family of sign matrices. Neighboring-state tests force many weighted sign errors. A trace-norm dual witness built from the true sign matrix then converts these errors into the full sum $\Tr\sqrt\rho$, without selecting a single spectral scale. This step applies the dual witness of Acharya \emph{et al.}~\cite[Lemma 4.7]{ADLY}, the true sign pattern, to a spectrally weighted square-root family. The argument yields a constant-probability error event uniformly in all parameters. A local two-point construction supplies the additional confidence term at every center.

For achievability, we analyze the existing purification/Hayashi procedure~\cite{TWZ,PSTW,Hayashi}. Its measurement uses neither the center nor its spectrum. An exact Gaussian representation gives a direct expected-loss bound and a tail bound in which $\log(1/\delta)$ enters additively with $\Gamma$. Combined with the option of returning $\rho$, this one measurement is rate-optimal over every prescribed ball. The theorem thus identifies the difficulty of the promise class; it does not require a new center-dependent measurement. The upper rates themselves are implied by earlier results. The expected spectral rate follows from the second moments of this procedure~\cite[Theorem 1.7]{PSTW}, and the confidence rate follows from its observable concentration~\cite[Proposition 4.7]{PSTWspectrum} by generic chaining~\cite{Dirksen}. These trace-distance bounds are not stated in the papers of Pelecanos \emph{et al.}, so Section~\ref{sec:upper} gives a short direct proof with explicit constants; Section~\ref{sec:comparison} and Remark~\ref{rem:priorupper} separate the inherited ingredients from the local converse.

In Section~\ref{sec:model} we state the results, their consequences, and the comparison with earlier guarantees. Section~\ref{sec:facts} collects the required standard facts. In Sections~\ref{sec:lower}--\ref{sec:confidenceLower} we prove the lower bound, upper bound, and confidence refinement. In the appendices we state two analytic tools and the fixed-rank rate at a normalized-projector center, which is implied by existing results.

\section{Problem setting and main results}\label{sec:model}

\subsection{Statistical experiment and loss}

Let $d\ge2$ and write
\[
 \cD_d=\{\sigma\in\mathbb C^{d\times d}:\sigma\succeq0,\ \Tr\sigma=1\},
 \qquad T(\sigma,\tau)=\tfrac12\norm{\sigma-\tau}_1.
\]
The trace norm is $\norm X_1=\Tr\sqrt{X^\dagger X}$, the Hilbert--Schmidt norm is $\norm X_{\HS}=\sqrt{\Tr X^\dagger X}$, and $\norm X_{\op}$ denotes the operator norm. We reserve $\norm{\cdot}_2$ for the Euclidean norm of a vector.

The estimator knows a classical description of $\rho\in\cD_d$, including its eigenvectors, and a radius $R\in(0,1]$. The unknown state belongs to
\[
 \cB_R(\rho)=\{\sigma\in\cD_d:T(\sigma,\rho)\le R\}.
\]
An admissible procedure applies any POVM to $\sigma^{\otimes n}$ and outputs a density matrix $\widehat\sigma$. It may use $(d,n,\rho,R)$, the requested confidence or accuracy when specified, and private randomness. It cannot use the unknown $\sigma$. We impose no computational or restricted-measurement requirement.

Define the minimax expected risk and confidence radius by
\[
 \cR_n(\rho,R)=\inf_{\mathsf A}\sup_{\sigma\in\cB_R(\rho)}
       \E_\sigma T(\widehat\sigma_{\mathsf A},\sigma),
\]
\[
 \cQ_n(\rho,R,\delta)=\inf_{\mathsf A}\inf\left\{u\ge0:
 \sup_{\sigma\in\cB_R(\rho)}\Prb_\sigma
 \{T(\widehat\sigma_{\mathsf A},\sigma)>u\}\le\delta\right\}.
\]
Let $N(\rho,R,\varepsilon,\delta)$ be the least nonnegative integer $n$ for which an admissible procedure has error at most $\varepsilon$ with failure probability at most $\delta$, uniformly over the ball. Set
\begin{equation}\label{eq:Gamma}
 \begin{split}
 S(\rho)&=\Tr\sqrt\rho,\qquad \Gamma(\rho,R)=dS(\rho)^2+d^2R,\\
 \Gamma_\delta&=\Gamma(\rho,R)+\log(1/\delta),\\
 q_n&=\min\{R,\sqrt{\Gamma(\rho,R)/n}\},\qquad
 q_{n,\delta}=\min\{R,\sqrt{\Gamma_\delta/n}\}.
 \end{split}
\end{equation}
Since $1\le S(\rho)^2\le d$, we have $d\le\Gamma(\rho,R)\le2d^2$. All logarithms are natural. The notation $a\asymp b$ means that $cb\le a\le Cb$ for fixed positive numerical constants independent of all problem parameters.

\begin{theorem}[Local minimax risk and confidence radius]\label{thm:main}
There are universal constants $c,C>0$ such that, for every $d\ge2$, $\rho\in\cD_d$, $0<R\le1$, integer $n\ge1$, and $0<\delta\le1/3$,
\[
 c q_n\le\cR_n(\rho,R)\le Cq_n,
\]
\begin{equation}\label{eq:fullconfidence}
 c q_{n,\delta}\le\cQ_n(\rho,R,\delta)\le Cq_{n,\delta}.
\end{equation}
In particular, $\cQ_n(\rho,R,1/3)\asymp q_n$. The lower-bound construction also shows that, for every procedure, some $\sigma\in\cB_R(\rho)$ satisfies
\begin{equation}\label{eq:stronglower}
 \Prb_\sigma\{T(\widehat\sigma,\sigma)>10^{-6}q_n\}>1/3.
\end{equation}
\end{theorem}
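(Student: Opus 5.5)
We prove four inequalities: upper bounds for $\cR_n$ and $\cQ_n$, and lower bounds for both. They reduce to two core statements. Output $\rho$ whenever $\sqrt{\Gamma_\delta/n}\ge R$; this has error at most $R$ deterministically. Otherwise we need one measurement with $\E T(\widehat\sigma,\sigma)\lesssim\sqrt{\Gamma/n}$ and $\Prb\{T(\widehat\sigma,\sigma)>C\sqrt{\Gamma_\delta/n}\}\le\delta$, uniformly over $\cB_R(\rho)$. On the lower side, \eqref{eq:stronglower} gives $\cQ_n(\rho,R,1/3)\gtrsim q_n$. Markov's inequality then gives $\cR_n\ge 10^{-6}q_n/3$, and monotonicity in $\delta$ handles the $\Gamma$ part of $q_{n,\delta}$ for all $\delta\le1/3$. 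It remains to prove a separate two-point lower bound of order $\min\{R,\sqrt{\log(1/\delta)/n}\}$.

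\emph{Upper bound.} We run the purification/Hayashi procedure of \cite{TWZ,PSTW,Hayashi}. It performs weak Schur sampling followed by a $U(d)$-covariant measurement on the isotypic component. Conditioned on the Young diagram, it outputs an estimator whose fluctuations we represent exactly, in the paper's Gaussian representation, as $\sigma$ plus a matrix $G$. In the principal regime, $G$ behaves like $n^{-1/2}(\sqrt\sigma Z+Z^\dagger\sqrt\sigma)$ with $Z$ a complex Ginibre matrix. A trace-norm bound then gives $\E\|G\|_1\lesssim n^{-1/2}(\sqrt d\,\Tr\sqrt\sigma+\ldots)$, using $\|\sqrt\sigma Z\|_1\le \sum_i\sqrt{\lambda_i}\|Z^\top e_i\|_2\approx\sqrt d\,S(\sigma)$. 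We then relate $\sigma$ to the center by showing $d S(\sigma)^2\lesssim dS(\rho)^2+d^2R$. This follows from the Powers--Størmer-type bound $\|\sqrt\sigma-\sqrt\rho\|_{\HS}^2\le\|\sigma-\rho\|_1\le 2R$ together with Cauchy--Schwarz, $|S(\sigma)-S(\rho)|\le\sqrt d\,\|\sqrt\sigma-\sqrt\rho\|_{\HS}$. The concentration statement comes from Gaussian Lipschitz concentration of $\|G\|_1$. Its Lipschitz constant, with respect to the Frobenius norm of $Z$, is $\lesssim n^{-1/2}$ because $\|\sqrt\sigma\|_{\op}\le1$. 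This produces the additive $\sqrt{\log(1/\delta)/n}$. Finally, we project the output onto $\cD_d$ at a loss of a factor of $2$.

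\emph{Lower bound for the $\Gamma$ term.} If $dS(\rho)^2\ge d^2R$, we regularize and set $\rho_0=(1-\eta)\rho+\eta I/d$, or else work directly with a center for which $d^2R$ dominates, for example mixing in mass $R$. We perturb the square root: $\sigma_X\propto(\sqrt{\rho_0}+\epsilon X)^2$, where $X$ ranges over Hermitian sign matrices with entries $\pm1$ and $\epsilon\asymp\min\{R/\sqrt{\Gamma},1/\sqrt n\}\cdot d^{-1/2}$, suitably tuned. Neighboring sign patterns differ in one entry $(i,j)$. We bound their infidelity per copy by $\epsilon^2$ times a constant, so $n\epsilon^2\lesssim1$ makes the pair indistinguishable. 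Assouad's lemma then forces a constant fraction of sign errors. To convert sign errors into trace distance, we use the dual witness $W$ of \cite[Lemma 4.7]{ADLY} built from the true sign matrix. Pairing $\widehat\sigma-\sigma_X$ with the weighted witness, $\langle \widehat\sigma-\sigma_X,W\rangle$ with $\|W\|_{\op}\lesssim\sqrt d$, yields an error of order $\epsilon\sum_{i,j}(\sqrt{\lambda_i}+\sqrt{\lambda_j})/\sqrt d\asymp\epsilon\sqrt d\,S$, where all spectral scales contribute. Plugging in $\epsilon$ gives $\sqrt{dS^2/n}$. A separate sub-case handles the $d^2R$ part by using the mixed component of weight $\asymp\min\{R,d/\sqrt{\ldots}\}$ as the effective center, whose square-root trace is $\asymp d\sqrt{R}$ up to the cap. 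We also verify membership $T(\sigma_X,\rho)\le R$, which is where $\epsilon$ is capped by $R$.

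\emph{Confidence term and main obstacle.} For $\log(1/\delta)$, we pick $\sigma_\pm=\rho\pm h\Delta$ with a traceless $\Delta$ such that both lie in the ball, for instance by moving mass along one eigen-direction or mixing toward a pure state. We take $T(\sigma_+,\sigma_-)\asymp\min\{R,\sqrt{\log(1/\delta)/n}\}$ and infidelity $\lesssim\log(1/\delta)/n$. The Bretagnolle--Huber bound on $F(\sigma_+^{\otimes n},\sigma_-^{\otimes n})$ then gives error probability $>\delta$. Care is needed when $\rho$ is pure, since fidelity then scales differently, and a rotation of the pure state handles that case. I expect the main obstacle to be the witness step in the sign-family lower bound. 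It must make the trace-norm loss pick up $\sum_i\sqrt{\lambda_i}$ uniformly, including tiny and zero eigenvalues, while the per-entry distinguishability stays $O(\epsilon^2)$ independently of the spectrum, and it must keep all constants explicit enough to reach $10^{-6}$.
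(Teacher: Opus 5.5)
Your outline follows the paper's architecture: a square-root sign family around $\rho_0=(1-R/2)\rho+(R/2)I/d$, Assouad, a witness built from the true sign matrix, the purification/Hayashi estimator with a switch to $\rho$, and a two-point bound. Your Powers--St{\o}rmer derivation of $S(\sigma)\le S(\rho)+\sqrt{2dR}$ is a valid alternative to the paper's Lidskii--Mirsky step, and a symmetric sign family could in principle replace the paper's rectangular one. However, several steps fail as written.

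\emph{The choice of $\epsilon$.} Your $\epsilon$ carries a spurious factor $d^{-1/2}$. One sign flip moves the purification by $O(\epsilon)$ in Hilbert--Schmidt norm for every $d$, so $\epsilon\asymp\min\{n^{-1/2},R/\sqrt\Gamma\}$ is admissible. With your choice, the witness bound $\epsilon\sqrt d\,S$ is only $S/\sqrt n$ or $R/\sqrt d$, a factor $\sqrt d$ short.

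\emph{The quadratic term.} With a full Hermitian $X$, the difference $\sigma_X-\sigma_{\widehat X}$ contains $\epsilon^2(X^2-\widehat X^2)$, plus a sign-dependent normalization if the diagonal is $\pm1$. Paired with $W=X$, this term is only bounded by $\epsilon^2\norm{X}_{\op}\cdot 2d^2\asymp\epsilon^2d^{5/2}$. The signal is of order $\epsilon\,dS(\rho_0)$, so you must show $\epsilon\lesssim S(\rho_0)/d^{3/2}$. That holds only through three facts together: the $R$-cap, $\Gamma\ge d^2R$, and the floor $S(\rho_0)^2\ge dR/2$. This curvature control is a second role of the regularization, besides producing the $d^2R$ term. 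One mixing also handles both of your cases at once, since $S(\rho_0)^2\ge\frac12\max\{S(\rho)^2,dR\}$.

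The paper sidesteps the quadratic term by perturbing only the lower-left $m\times k$ block, $A_\Xi=D+aE_\Xi$. Then $\Tr(DE_\Xi^\dagger)=0$ fixes the normalization, and $a^2E_\Xi E_\Xi^\dagger$ lies in a diagonal block that Lemma~\ref{lem:block} discards.

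\emph{Where the witness is applied.} It must be applied to $\sigma_X-\sigma_{\widehat X}$ for the nearest-vertex decoder, via $T(\sigma_X,\sigma_{\widehat X})\le 2T(\widehat\sigma,\sigma_X)$, not to $\widehat\sigma-\sigma_X$. Conversely, the spectrum-independent neighbor bound you call the main obstacle is immediate, since $\norm{A_X-A_Y}_{\HS}^2$ is a constant times $\epsilon^2$.

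\emph{The Lipschitz constant.} On the upper side, the dimension-free constant comes from $\norm{\sqrt\sigma\Delta}_1\le\norm{\sqrt\sigma}_{\HS}\norm{\Delta}_{\HS}$ with $\norm{\sqrt\sigma}_{\HS}=1$. Your reason, $\norm{\sqrt\sigma}_{\op}\le 1$, only gives $\norm{\Delta}_1\le\sqrt d\,\norm{\Delta}_{\HS}$. That yields a term $\sqrt{d\log(1/\delta)/n}$ and breaks the additive form $\Gamma+\log(1/\delta)$.

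\emph{The remainder.} What your ``$\ldots$'' hides is $Q/X\approx d^2/n$, and it cannot be dropped: by Proposition~\ref{prop:bias}, the bias at a pure state has trace norm of order $d^2/n$. It is absorbed only because the switch forces $n>\Gamma/R^2\ge d^2/R$, whence $d^2/n\le d\sqrt{R/n}\le\sqrt{\Gamma/n}$. Note also that the exact Gaussian representation comes from conditioning on the random purification $A$, not on a Young diagram.

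\emph{The two-point bound.} Your constructions do not give $1-F=O(\varepsilon^2)$ uniformly. For commuting states, $1-F$ is of order $\sum_i(\Delta\lambda_i)^2/\lambda_i$. Routing mass through a direction of small weight $\lambda$ therefore costs of order $\varepsilon^2/\lambda$. Mixing toward a pure state with mass at most $R$ costs of order $\varepsilon^2/(R+1/d)$ at $\rho=I/d$, which is the classical boundary behavior of Remark~\ref{rem:classicalconfidence}. The paper handles this in two cases:
\begin{itemize}
\item when $\lambda_1\le 3/4$, it reweights proportionally two groups of eigenvalues with masses in $[1/4,3/4]$;
\item whenever $\lambda_1>3/4$, not only when $\rho$ is pure, it rotates the top eigenvector into a second one by $\theta=4\varepsilon$, using $\Tr(\rho U_{2\theta})\ge 1-2\theta^2$.
\end{itemize}
The rotation is needed for, e.g., $\diag(1-\eta,\eta,0,\ldots)$ with $\eta\ll\varepsilon$, where no commuting move is even feasible.
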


The theorem holds without a rank, spectral-gap, or lower-eigenvalue assumption. Equation~\eqref{eq:stronglower} supplies the expected-risk and constant-confidence converses directly. The numerical constant is conservative; the theorem concerns uniform parameter dependence rather than sharp leading constants.

\begin{corollary}[Copy complexity in the high-accuracy regime]\label{thm:confidence}
There are universal constants $c_0,c,C>0$ such that, whenever $d\ge2$, $\rho\in\cD_d$, $0<R\le1$, $0<\varepsilon\le c_0R$, and $0<\delta\le1/3$,
\begin{equation}\label{eq:copy}
 c\frac{\Gamma(\rho,R)+\log(1/\delta)}{\varepsilon^2}
 \le N(\rho,R,\varepsilon,\delta)
 \le\left\lceil C\frac{\Gamma(\rho,R)+\log(1/\delta)}{\varepsilon^2}\right\rceil.
\end{equation}
\end{corollary}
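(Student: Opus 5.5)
We derive Corollary~\ref{thm:confidence} directly from the confidence-radius part \eqref{eq:fullconfidence} of Theorem~\ref{thm:main}. The only work is to invert the map $n\mapsto q_{n,\delta}$ in the regime $\varepsilon\le c_0R$, where the minimum defining $q_{n,\delta}$ is attained by the statistical term.

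\emph{Upper bound.} Let $C_1$ denote the constant in the upper half of \eqref{eq:fullconfidence}. Set $n=\lceil C_1^2\Gamma_\delta/\varepsilon^2\rceil$, so that $\sqrt{\Gamma_\delta/n}\le\varepsilon/C_1$. Then
\[
 \cQ_n(\rho,R,\delta)\le C_1q_{n,\delta}\le C_1\sqrt{\Gamma_\delta/n}\le\varepsilon .
\]
The infimum in the definition of $\cQ_n$ need not be attained, so we use the strict slack in the theorem: replace $C_1$ by $2C_1$, i.e.\ take $C=4C_1^2$. This gives a procedure whose confidence radius at level $\delta$ is strictly below any value exceeding $\varepsilon/2$, hence at most $\varepsilon$. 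Thus $N\le\lceil C\Gamma_\delta/\varepsilon^2\rceil$. Note that $n\ge1$ automatically, since $\Gamma_\delta\ge d\ge2$.

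\emph{Lower bound.} Let $c_1$ be the constant in the lower half of \eqref{eq:fullconfidence}, and choose $c_0<c_1$. Suppose some procedure achieves error at most $\varepsilon$ with probability at least $1-\delta$ using $n$ copies. We may assume $n\ge1$. Indeed, $n=0$ is impossible: with no data the output is independent of $\sigma$, and $\cB_R(\rho)$ contains two states at trace distance more than $2\varepsilon$. For instance, take $\rho$ and a state at distance $R$ from it, which exists by mixing $\rho$ toward a pure state orthogonal to part of its support or otherwise far away. Hence $\cQ_0>\varepsilon$ whenever $2\varepsilon<R$, and $c_0<1/2$ suffices for this case.

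For $n\ge1$ we have $\cQ_n(\rho,R,\delta)\le\varepsilon$, so
\[
 c_1\min\{R,\sqrt{\Gamma_\delta/n}\}\le\varepsilon\le c_0R<c_1R .
\]
The minimum therefore cannot equal $R$, which forces $c_1\sqrt{\Gamma_\delta/n}\le\varepsilon$, i.e.\ $n\ge c_1^2\Gamma_\delta/\varepsilon^2$. Taking $c=c_1^2$ gives the lower bound in \eqref{eq:copy}.

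\emph{Main obstacle.} The only delicate points are bookkeeping. First, non-attainment of infima in the definitions of $\cQ_n$ and $N$ is handled by the constant slack. Second, the boundary case $n=0$, which Theorem~\ref{thm:main} does not cover, needs the existence of a state at distance $R$ from $\rho$ inside $\cB_R(\rho)$. This holds for $R\le1$: take $\sigma_t=(1-t)\rho+t\ket{v}\bra{v}$, where $\ket v$ is an eigenvector of the smallest eigenvalue $\lambda_{\min}\le 1/d\le 1/2$. Its trace distance from $\rho$ is $t(1-\lambda_{\min})$, which is continuous and ranges over $[0,1-\lambda_{\min}]\supseteq[0,1/2]$. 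So distance $\min\{R,1/2\}>2\varepsilon$ is achievable once $c_0\le1/4$. Everything else follows from the theorem.
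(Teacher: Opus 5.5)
Your argument is correct and is essentially the paper's proof: you invert the two halves of \eqref{eq:fullconfidence}, use $\varepsilon\le c_0R<c_1R$ to force the minimum in $q_{n,\delta}$ to be $\sqrt{\Gamma_\delta/n}$, and exclude $n=0$ separately. The differences are minor: the paper rules out $n=0$ with the pair from Lemma~\ref{lem:twopoint} and takes the upper bound from the concrete procedure behind \eqref{eq:fullconfidenceupper}, which avoids your factor-two slack for non-attained infima, and in your no-data argument you should take $c_0<\min\{c_1,1/4\}$ strictly so that $\min\{R,1/2\}>2\varepsilon$ also holds at $R=1$.
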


The proof of this corollary is given after the confidence converse in Section~\ref{sec:confidenceLower}.

\begin{remark}[Boundary cases and accuracy]
If $R=0$ or $d=1$, the unknown state is known. If $\varepsilon\ge R$, returning $\rho$ succeeds without copies. Theorem~\ref{thm:main} describes the confidence radius for all $n\ge1$ and $0<\delta\le1/3$, up to universal constants. Inverting such a bound gives Corollary~\ref{thm:confidence} only below a sufficiently small multiple of $R$. The small value of $c_0$ furnished by our proof reflects conservative lower-bound constants; it is not a statistical transition. Neither statement locates the exact copy-complexity transition as $\varepsilon/R$ approaches one. All results concern arbitrary collective measurements on the unrestricted ball.
\end{remark}

\subsection{Consequences and representative centers}\label{sec:interpretation}

The spectral size $S(\rho)^2$ equals one for a pure state and $r$ for a normalized rank-$r$ projector. More generally, it retains the weights of small eigenvalues rather than only their number. At constant confidence, Corollary~\ref{thm:confidence} gives the following examples.

\begin{center}
\begin{tabular}{P{.42\linewidth}P{.46\linewidth}}
\toprule
Known center & Copies for $\varepsilon\le c_0R$\\
\midrule
Pure state $P$ & $\asymp(d+d^2R)/\varepsilon^2$\\
Maximally mixed state $I/d$ & $\asymp d^2/\varepsilon^2$\\
Normalized projector $P_r/r$ & $\asymp(dr+d^2R)/\varepsilon^2$\\
$\diag(1-u,u/(d-1),\ldots)$, $0\le u\le1/2$ & $\asymp[d+d^2(u+R)]/\varepsilon^2$\\
$\lambda_j=j^{-2}/\sum_{i=1}^di^{-2}$ & $\asymp[d(\log(d+1))^2+d^2R]/\varepsilon^2$\\
\bottomrule
\end{tabular}
\end{center}

For the peaked spectrum, $S(\rho)^2=(\sqrt{1-u}+\sqrt{(d-1)u})^2\asymp1+du$. For the power-law spectrum, the normalizing sum stays bounded above and below, while $\sum_{j=1}^d1/j\asymp\log(d+1)$. Thus exact rank alone can greatly overestimate the coefficient in the local rate.

Let $N_{\mathrm{global}}(d,\varepsilon,\delta)$ denote copy complexity over all $d$-dimensional states. A radius-one ball is the entire state space, so Corollary~\ref{thm:confidence} also gives $N_{\mathrm{global}}\asymp(d^2+\log(1/\delta))/\varepsilon^2$ in the common high-accuracy regime. Consequently,
\begin{equation}\label{eq:savingsconfidence}
 \frac{N(\rho,R,\varepsilon,\delta)}
      {N_{\mathrm{global}}(d,\varepsilon,\delta)}
 \asymp
 \frac{dS(\rho)^2+d^2R+\log(1/\delta)}{d^2+\log(1/\delta)}.
\end{equation}
Along a sequence of problems, this ratio tends to zero if and only if
\begin{equation}\label{eq:savingsconditions}
 R\longrightarrow0,\qquad
 \frac{S(\rho)^2}{d}\longrightarrow0,\qquad
 \frac{\log(1/\delta)}{d^2}\longrightarrow0.
\end{equation}
Indeed, a vanishing ratio first forces $\log(1/\delta)=o(d^2)$, after which the remaining two conditions follow by dividing the numerator by $d^2$. The converse is immediate from the same expression. At constant confidence, the last condition is automatic whenever $S(\rho)^2/d\to0$. For a pure center, the saving factor is then $\asymp\min\{d,1/R\}$. If the confidence term dominates $d^2$, localization yields no unbounded saving.

At a pure center, the nonzero $d$ term comes from coherent changes of direction, even though the central spectrum is $(1,0,\ldots,0)$. The two-point rotation in Section~\ref{sec:confidenceLower} also explains why the confidence cost has no radius prefactor. These effects differ from the commuting boundary model in Remark~\ref{rem:classicalconfidence}.

The rank-$r$ row constrains only the center. If the unknown state also has rank $r$, Corollary~\ref{cor:fixedrank} shows, for $1\le r\le d/2$, that the expected risk and the constant-confidence radius over
\[
 \{\sigma\in\cD_d:\operatorname{rank}\sigma=r,
                 \ T(\sigma,P_r/r)\le R\}
\]
are both $\asymp\min\{R,\sqrt{rd/n}\}$. The same statement holds for rank at most $r$. Thus admitting rank-changing alternatives adds the $d^2R$ contribution at this center. This fixed-rank rate is implied by existing results: the lower bound follows from the proof of~\cite[Proposition 4.13]{NZ}, and the upper bound from known-rank estimators~\cite{OW,PSTW}. Appendix~\ref{app:rank} gives the deduction together with a short direct proof. The corollary does not characterize neighborhoods of arbitrary nonuniform fixed-rank centers.

\subsection{Comparison with existing guarantees}\label{sec:comparison}

The nearest results address different parts of the problem. Table~\ref{tab:priorwork} identifies which ingredients are inherited and which local dependence still requires a converse.

\begin{table}[htbp]
\centering\small
\caption{Relation to existing guarantees and techniques.}\label{tab:priorwork}
\begin{tabular}{P{.22\linewidth}P{.34\linewidth}P{.34\linewidth}}
\toprule
Results & Established guarantee & Relation to the present theorem\\
\midrule
Global and low-rank tomography~\cite{OW,HHJWY,SSW}
& Optimal global rank-dependent rates; approximate-low-rank upper bounds.
& A prescribed ball need not admit a useful low-rank approximation at the requested accuracy.\\
Local lower bounds~\cite{NZ,KLMR}
& Well-conditioned centers or localized projector families, including bounded-batch models.
& The projector family of~\cite{NZ} already gives the fixed-rank local lower bound of Corollary~\ref{cor:fixedrank}. Neither gives the arbitrary-center coefficient $dS(\rho)^2+d^2R$.\\
Trace-distance Hamming separation~\cite{ADLY}
& Dual witness built from the true sign parameter, for an unweighted family at the maximally mixed state.
& Our converse applies this witness to a spectrally weighted square-root family.\\
Purification-based estimators~\cite{PSTW,PSTWspectrum}
& Known-rank guarantees; second moments and observable concentration for an existing estimator.
& With generic chaining~\cite{Dirksen}, these imply the local upper rates in expectation and at every confidence level (Remark~\ref{rem:priorupper}); the trace-distance bounds are not stated there.\\
Present work
& Matching local converse for arbitrary spectra and a uniform confidence-radius formula.
& Singular centers, rank changes, and confidence cost are covered by one finite-sample rate; the upper rates are reproved directly with explicit constants.\\
\bottomrule
\end{tabular}
\end{table}

\paragraph{Tomography and local lower bounds.}
O'Donnell and Wright~\cite{OW} prove the $O(rd/\varepsilon^2)$ trace-distance upper bound for rank-at-most-$r$ states and guarantees based on low-rank approximation. Haah \emph{et al.}~\cite{HHJWY} prove the tight full-dimensional lower bound; Scharnhorst, Spilecki, and Wright~\cite{SSW} prove the matching rank-dependent lower bound, even under the stronger normalized-projector promise in the low-rank regime. Our global specializations provide alternative proofs of these known rates.

In concurrent work, Keskin \emph{et al.}~\cite[Corollary 1.4]{KLMR} consider centers $\rho\succeq I/(2d)$ and neighborhoods $\norm{\sigma-\rho}_{\op}\le L\varepsilon/d$, for a universal constant $L\ge4$. Their result gives an $\Omega(d^2/\varepsilon^2)$ lower bound for collective tomography in $\cB_R(\rho)$ when $\varepsilon\le2R/L$, since $T\le(d/2)\norm{\cdot}_{\op}$, in their small-accuracy regime. Nayak and Zhou~\cite{NZ} obtain the optimal bounded-batch rank-dependent rate. For $r\le d/2$, their hard family consists of normalized projectors close to a fixed one, so the proof of their Proposition 4.13 already yields the fixed-rank local lower rate of Corollary~\ref{cor:fixedrank}, although no local statement is formulated there; Appendix~\ref{app:rank} gives the deduction. Their proof uses Fisher information and a smoothly truncated Gaussian density, whereas ours uses a discrete family and weighted trace-norm decoding. Neither result states the arbitrary-center coefficient in~\eqref{eq:introGamma}.

\paragraph{Antecedents of the converse.}
The decoding step in Section~\ref{sec:dualwitness} extends the trace-distance Hamming separation of Acharya \emph{et al.}~\cite[Lemma 4.7]{ADLY}. For an unweighted sign family around the maximally mixed state, they test $\sigma_z-\sigma_{z'}$ against the perturbation of $z$ alone, so that only the true parameter needs an operator-norm bound. Their setting is single-copy measurements, but this step is algebraic. We add spectral weights and a square-root parametrization, which retain $\Tr\sqrt\rho$ at an arbitrary center.

The radius term alone also follows from the global bound~\cite{HHJWY}. The states $(1-R)\rho+R\tau$ lie in $\cB_R(\rho)$, and $n$ copies of one of them can be prepared from $\rho$ and, with high probability, $O(nR+1)$ copies of $\tau$. An estimate within $\varepsilon$ of such a state yields, after rescaling and projection onto states, an estimate of $\tau$ within $2\varepsilon/R$; this gives $\Omega(d^2R/\varepsilon^2)$ for $\varepsilon\le cR$ at constant confidence. With the projector family of~\cite{NZ}, this reduction recovers the constant-confidence converse at normalized-projector centers with $r\le d/2$. The spectral term $dS(\rho)^2$ at a general center is not covered by these reductions.

\paragraph{A comparison with low-rank approximation guarantees.}
Let $\lambda_j=j^{-2}/\sum_{i=1}^d i^{-2}$, take
$R_d=(\log(d+1))^2/d$, and fix $\varepsilon_d=cR_d$ with
$0<c<c_0$. For all sufficiently large $d$, the radius is at most one,
and Corollary~\ref{thm:confidence} gives, at constant confidence,
\[
 N(\rho,R_d,\varepsilon_d,1/3)
 \asymp\frac{d(\log(d+1))^2}{\varepsilon_d^2}.
\]
This conclusion cannot be obtained by directly applying an approximate
rank promise throughout the ball. Indeed, the ball contains
$\sigma_d=(1-R_d)\rho+R_dI/d$, and its ordered eigenvalues satisfy
\[
 \sum_{j>r}\lambda_j(\sigma_d)\ge R_d(1-r/d).
\]
For any state $\tau$ of rank at most $r$, measuring the projector onto
its support shows that
$T(\sigma_d,\tau)\ge\sum_{j>r}\lambda_j(\sigma_d)$.
Consequently, approximation to accuracy $\varepsilon_d$ requires
$r\ge(1-c)d$, even though the center has a rapidly decaying spectrum.
The approximate-rank guarantee of O'Donnell and Wright~\cite[Theorem 1.5
and Corollary 1.6]{OW} uses $O(rd/\eta^2)$ copies for trace-distance
error at most the optimal rank-$r$ approximation error plus $\eta$.
Optimizing this guarantee uniformly over the ball therefore retains
the scale $d^2/\varepsilon_d^2$, which exceeds the local minimax scale
by a factor of order $d/(\log(d+1))^2$.
This comparison concerns the stated approximate-rank guarantee, not a
lower bound on the performance of its underlying full-rank estimator.
The spectral upper bounds implied by~\cite{PSTW,PSTWspectrum} do recover
the local upper rate after localization; our
matching converse establishes that rate as the minimax answer.

\paragraph{Upper-bound ingredients.}
The debiased estimator of Pelecanos, Spilecki, and Wright~\cite{PSW} has spectral second-moment guarantees. For the estimator $\mathrm{Mix}(\mathcal A_{\mathrm{GPS}})$ with a $d$-dimensional ancilla, which uses the measurement analyzed here, existing results imply
\begin{equation}\label{eq:priorupper}
 T(\widehat\sigma,\sigma)
 =O\!\left(\sqrt{d/n}\,S(\sigma)+d^2/n\right)
\end{equation}
after an affine change of the output: in expectation by~\cite[Theorem 1.7]{PSTW}; at constant confidence by the proof of Theorem 4.8 in~\cite{PSTWspectrum}; and, with the additional terms $\sqrt{\log(1/\delta)/n}+\log(1/\delta)/n$, with probability at least $1-\delta$ by~\cite[Proposition 4.7]{PSTWspectrum} and generic chaining~\cite[Theorem 3.5]{Dirksen}. These trace-distance bounds are not stated in the cited papers; Remark~\ref{rem:priorupper} gives the deductions, the last one in outline. After localization they yield the upper halves of Theorem~\ref{thm:main}, and the new converse establishes their local optimality.

For the purification/Hayashi procedure, Proposition 1.4 of~\cite{PSTW} gives the pure-state guarantee. Combining it with their mixed-to-pure reduction, as noted immediately after that proposition, gives infidelity at most $\eta$ with $O((rd+\log(1/\delta))/\eta)$ copies using an $r$-dimensional ancilla. Since $T\le\sqrt{1-F}$, setting $\eta=\varepsilon^2$ gives the known-rank trace-distance bound. Their Theorem 1.3 itself uses the gate-efficient pure-state algorithm $\mathcal A_{\mathrm{GKKT}}$. We analyze the purification/Hayashi procedure with ancilla dimension $d$ directly, through an exact Gaussian representation; this retains the full spectral sum and separates the confidence fluctuation with explicit constants. Its exact bias, a consequence of the unbiasedness of the debiased estimator~\cite[Lemma 5.7]{PSTW}, and the role of the ancilla dimension are stated in Proposition~\ref{prop:bias} and Remark~\ref{rem:ancilladimension}.

\paragraph{Other local and instance-dependent formulations.}
Quantum local asymptotic normality and minimax estimation~\cite{KG,GJK} concern regular finite-dimensional local models. Lahiry and Nussbaum~\cite{LN} study fixed-rank shrinking trace-norm neighborhoods, with squared Hilbert--Schmidt loss at fixed dimension and rank and distinct positive nonzero eigenvalues. These asymptotic statements do not supply the present uniform trace-distance result.

Certification asks whether the unknown state equals a target or is separated from it, rather than reconstructing every state in a neighborhood. Its worst-case copy complexity is $\Theta(d/\varepsilon^2)$~\cite{OWtesting,BOW}, and the instance-dependent bounds of O'Donnell and Wadhwa~\cite{OWcert} already involve
\begin{equation}\label{eq:certfunctional}
 dF(\rho,I/d)=(\Tr\sqrt\rho)^2,
\end{equation}
with spectral truncations and logarithmic factors. Here $F$ is squared fidelity. Thus the square-root spectral functional itself is not new. Classical instance-dependent learning and testing~\cite{VVlearning,VVtesting} provide related motivation: the empirical distribution satisfies $\E\norm{\widehat p-p}_1\le n^{-1/2}\sum_i\sqrt{p_i(1-p_i)}$, with sharper coordinatewise bounds available~\cite{BK}. The present problem additionally includes quantum coherences and rank-changing neighborhoods.

\subsection{Proof strategy}

The converse must capture both the center's spectral contribution and the mass that the ball permits in small or unoccupied directions. Replace the center temporarily by $\rho_0=(1-R/2)\rho+(R/2)I/d$. This uses at most half the radius and gives a spectral floor $R/(2d)$. A rectangular sign perturbation of $\sqrt{\rho_0}$ produces positive states after normalization, and the entire sign cube remains in the ball. Flipping one sign changes a purification by only order $n^{-1/2}$, so neighboring-state tests force many sign errors. Testing the off-diagonal error block against the true sign matrix, as in the unweighted argument of~\cite{ADLY}, converts these errors into a weighted sum with weights $\sqrt{\lambda_i(\rho_0)}$. The spectral floor controls the quadratic normalization cost.

The power-law center above illustrates why the weights must be retained
simultaneously. Let $\lambda_i=Z_d^{-1}i^{-2}$, where
$Z_d=\sum_{i=1}^d i^{-2}$, and divide the first
$k=\lfloor d/2\rfloor$ indices into dyadic bands
$B_j=\{2^j,\ldots,2^{j+1}-1\}$, keeping the complete bands.
Since $1\le Z_d\le2$, we have
\[
 \sum_{i\in B_j}\sqrt{\lambda_i}\asymp1,
 \qquad
 \sum_{i=1}^k\sqrt{\lambda_i}\asymp\log(d+1).
\]
Thus, in a rectangular perturbation with $m\asymp d$ rows,
the spectral coefficient
$m(\sum_{i\in B_j}\sqrt{\lambda_i})^2$ obtained from any one
band is only of order $d$, whereas retaining all bands gives
$d(\log(d+1))^2$. The same comparison applies after our
regularization when $R\le d^{-2}$, because then
$\mu_i=(1-R/2)\lambda_i+R/(2d)\asymp\lambda_i$ uniformly in
$i$. This comparison does not exclude arguments that combine
several bands; it shows why selecting a single favorable band
can miss the accumulation across spectral scales. The weighted
witness in Section~\ref{sec:dualwitness} preserves this accumulation
within one sign family.

For achievability, condition on the common random purification produced by the existing channel. The covariant measurement has an exact normalized Gaussian representation. Its reduced-state error splits into a linear term with mean $O(\sqrt d\,S(\sigma))$ and a quadratic remainder of order $d^2/n$. The linear term has a dimension-independent Lipschitz constant, so Gaussian concentration adds $\sqrt{\log(1/\delta)/n}$ without multiplying the spectral coefficient. The radius contribution absorbs the quadratic remainder whenever using the measurement improves on returning the center.

For the confidence converse, a balanced diagonal perturbation or a coherent two-dimensional rotation produces two states in the same ball with separation of order $\varepsilon$ and fidelity $1-O(\varepsilon^2)$. Binary testing then forces the confidence contribution at every center. Combining it with the weighted sign-family converse gives~\eqref{eq:fullconfidence} for every $n\ge1$.

\section{Preliminaries}\label{sec:facts}

\subsection{Matrix norms and off-diagonal blocks}

We use three standard properties of Schatten norms; see, for example,
Bhatia~\cite{Bhatia}. For a rectangular matrix $B$, nuclear/operator norm
duality gives
\begin{equation}\label{eq:duality}
 \norm B_1=\sup_{\norm C_{\op}\le1}|\Tr(C^\dagger B)|.
\end{equation}
Since a rank-one matrix $uv^\dagger$ has trace norm
$\norm u_2\norm v_2$, decomposing $B$ into its row matrices gives
\begin{equation}\label{eq:rownorm}
 \norm B_1\le\sum_i\norm{B_{i,:}}_2.
\end{equation}
The Schatten Cauchy--Schwarz inequality states that
\begin{equation}\label{eq:holder}
 \norm{UV^\dagger}_1\le\norm U_{\HS}\norm V_{\HS}.
\end{equation}
The following consequence lets us use an off-diagonal block as a
trace-distance witness.

\begin{lemma}[Extracting an off-diagonal block]\label{lem:block}
For a Hermitian block matrix $H=\begin{pmatrix}H_{11}&B\\B^\dagger&H_{22}\end{pmatrix}$,
\[
 \tfrac12\norm H_1\ge\norm B_1.
\]
\end{lemma}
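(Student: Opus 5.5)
Our plan is to apply the duality formula~\eqref{eq:duality} to $H$ with a test matrix built from the dual witness of the block $B$. Let $B$ be $p\times q$. By~\eqref{eq:duality} applied to $B$, pick $C$ of size $p\times q$ with $\norm C_{\op}\le1$ and $\Tr(C^\dagger B)=\norm B_1$. Since the supremum is attained at a partial isometry from the polar decomposition, the trace is real and nonnegative. Then define the Hermitian test matrix
\[
 W=\begin{pmatrix}0&C\\C^\dagger&0\end{pmatrix}.
\]

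Next we check $\norm W_{\op}\le1$. We have $W^2=\diag(CC^\dagger,C^\dagger C)$, whose operator norm is $\norm C_{\op}^2\le1$. Then we compute
\[
 \Tr(W^\dagger H)=\Tr(WH)=\Tr(CB^\dagger)+\Tr(C^\dagger B)=\overline{\Tr(C^\dagger B)}+\Tr(C^\dagger B)=2\norm B_1,
\]
using $\Tr(CB^\dagger)=\overline{\Tr(BC^\dagger)}=\overline{\Tr(C^\dagger B)}$.

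Applying~\eqref{eq:duality} to the square matrix $H$ with test matrix $W$ gives $\norm H_1\ge|\Tr(W^\dagger H)|=2\norm B_1$, which is the claim. We do not expect any step to be a real obstacle. The only points needing care are that the duality supremum is attained with a real nonnegative value, which we ensure by multiplying $C$ by a phase, and the trace bookkeeping that shows both off-diagonal contributions are equal. An alternative route is to conjugate $H$ by the unitary $\diag(I,-I)$ and average: this shows $\norm H_1\ge\norm{\begin{pmatrix}0&B\\B^\dagger&0\end{pmatrix}}_1$, and the latter has singular values equal to those of $B$, each appearing twice. We keep this as a fallback if the first argument runs into difficulties.
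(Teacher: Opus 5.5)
Your proof is correct, but your primary route differs from the paper's; your fallback is exactly the paper's argument.

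The paper uses a pinching with $J=\diag(I,-I)$. It writes the off-diagonal part of $H$ as $(H-JHJ)/2$ and bounds its trace norm by $\norm H_1$ using the triangle inequality and unitary invariance. It then identifies that trace norm as $2\norm B_1$, because the Hermitian dilation of $B$ has nonzero eigenvalues $\pm s_i(B)$.

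You work on the dual side instead. You take a contraction $C$ attaining $\Tr(C^\dagger B)=\norm B_1$ and lift it to its Hermitian dilation $W$. You read off $\norm W_{\op}=\norm C_{\op}$ from $W^2=\diag(CC^\dagger,C^\dagger C)$ and compute $\Tr(WH)=2\operatorname{Re}\Tr(C^\dagger B)$.

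The two routes trade off differently. The pinching proof needs neither attainment of a supremum nor a phase choice, but it uses the spectrum of the dilation. Yours needs only the easy direction of~\eqref{eq:duality} for $H$, together with attainment (or an $\varepsilon$-approximation) for $B$.

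Your version also gives the slightly more flexible inequality $\norm H_1\ge2\operatorname{Re}\Tr(C^\dagger B)$ for every contraction $C$. This is exactly the form used in~\eqref{eq:weighteddual}. There the paper first applies this lemma and then tests the block $(a/h)D_k(\Xi-\Upsilon)^T$ against $\Xi^T/\norm\Xi_{\op}$. With your witness, those two steps merge into one duality step applied directly to $\sigma_\Xi-\sigma_\Upsilon$.
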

\begin{proof}
Let $J=\diag(I,-I)$. Since
\[
 \frac{H-JHJ}{2}=\begin{pmatrix}0&B\\B^\dagger&0\end{pmatrix}
\]
has nonzero eigenvalues equal to the positive and negative singular
values of $B$, the triangle inequality and unitary invariance give
\[
 2\norm B_1=\norm{(H-JHJ)/2}_1
 \le\tfrac12(\norm H_1+\norm{JHJ}_1)=\norm H_1,
\]
thus dividing by two proves the desired claim.
\end{proof}

\subsection{Purifications and binary testing}

For a matrix $A=(A_{ij})$, define
$\vecop(A)=\sum_{ij}A_{ij}\ket i\otimes\ket j$. The identities
\[
 \ip{\vecop(A)}{\vecop(B)}=\Tr A^\dagger B,
 \qquad \Tr_2\ket{\vecop(A)}\bra{\vecop(A)}=AA^\dagger
\]
show that $\vecop(A)$ purifies $AA^\dagger$ whenever $\norm A_{\HS}=1$.

We also use the following standard facts~\cite{Watrous}. Trace distance
cannot increase under a quantum channel, and for unit vectors $u,v$,
\begin{equation}\label{eq:pureT}
 T(\ket u\bra u,\ket v\bra v)=\sqrt{1-|\ip u v|^2}.
\end{equation}
Thus any chosen purifications give an upper bound on the trace distance
of their reduced states. The minimum equal-prior error of a binary
test, allowing randomized tests, is
\begin{equation}\label{eq:helstrom}
 P_e^*(\sigma,\tau)=\tfrac12(1-T(\sigma,\tau)).
\end{equation}
We use the squared-fidelity convention
\begin{equation}\label{eq:fidelity}
 F(\sigma,\tau)=\norm{\sqrt\sigma\sqrt\tau}_1^2.
\end{equation}
Since Uhlmann's theorem and~\eqref{eq:pureT} imply
$T(\sigma,\tau)\le\sqrt{1-F(\sigma,\tau)}$, and fidelity is
multiplicative under tensor products, from~\eqref{eq:helstrom} it
follows that
\begin{equation}\label{eq:testingF}
 P_e^*(\sigma^{\otimes n},\tau^{\otimes n})
 \ge\frac{1-\sqrt{1-F(\sigma,\tau)^n}}2.
\end{equation}

\section{A uniform lower bound from a square-root sign family}\label{sec:lower}

We prove the expected-risk and constant-confidence lower bounds in Theorem~\ref{thm:main}. Fix $\rho,R,d,n$, and consider an arbitrary measurement and estimator. We first construct a finite family of states in $\cB_R(\rho)$, then show that estimating these states requires recovering many signs that cannot be distinguished reliably from $n$ copies.

\subsection{Move the center slightly into the interior}

Choose an eigenbasis of $\rho$ and define
\begin{equation}\label{eq:regularize}
 \rho_0=(1-R/2)\rho+(R/2)I/d
       =\diag(\mu_1,\ldots,\mu_d),\qquad \mu_1\ge\cdots\ge\mu_d.
\end{equation}
Since $T(I/d,\rho)\le1$, it follows from~\eqref{eq:regularize} that
\[
 T(\rho_0,\rho)=(R/2)T(I/d,\rho)\le R/2,
 \qquad \mu_i\ge R/(2d).
\]
Thus a perturbation within distance $R/2$ of $\rho_0$ remains in the original ball, while every eigenvalue of $\rho_0$ has a positive lower bound. The state $\rho_0$ is used only to construct the lower-bound family.

Let
\begin{equation}\label{eq:L}
 k=\lfloor d/2\rfloor,\qquad m=d-k,\qquad
 s_0=\sum_{i=1}^k\sqrt{\mu_i},\qquad \ell_0=\sqrt m\,s_0.
\end{equation}
Since $\sqrt{\mu_1},\ldots,\sqrt{\mu_d}$ are nonincreasing, their first $k$ terms have average at least the average of all $d$ terms. Using $k\ge d/3$ and $m\ge d/2$, we then have
\[
 s_0\ge(k/d)S(\rho_0)\ge S(\rho_0)/3,
 \qquad \ell_0^2=ms_0^2\ge(d/18)S(\rho_0)^2.
\]
Moreover, $R\le1$ implies $\mu_i\ge\lambda_i(\rho)/2$, and the spectral floor gives $\sqrt{\mu_i}\ge\sqrt{R/(2d)}$. Summing these inequalities separately yields
\[
 S(\rho_0)^2\ge\tfrac12 S(\rho)^2,
 \qquad S(\rho_0)^2\ge\tfrac12 dR.
\]
Since the maximum of these two lower bounds is at least their average, it follows that
\begin{equation}\label{eq:Llower}
 \ell_0^2\ge\frac{dS(\rho)^2+d^2R}{72}=\frac\Gamma{72},
 \qquad \ell_0\ge\frac{\sqrt\Gamma}{9}.
\end{equation}
We shall also need to control the quadratic term in the perturbation. Since each of the first $k$ eigenvalues is at least $R/(2d)$, we have
\begin{equation}\label{eq:curvature}
 s_0^2\ge\frac{k^2R}{2d},\qquad \frac{k}{s_0^2}\le\frac{2d}{kR}\le\frac6R.
\end{equation}
Thus moving the center into the interior gives both the spectral size in~\eqref{eq:Llower} and the quadratic control in~\eqref{eq:curvature}.

\subsection{Construct every vertex of the finite family}

Let $\Xi$ range over the real $m\times k$ matrices whose entries belong to $\{-1,1\}$. The resulting family is a cube with $mk$ binary coordinates. With respect to $\mathbb C^d=\mathbb C^k\oplus\mathbb C^m$, define
\[
 D=\diag(\sqrt{\mu_1},\ldots,\sqrt{\mu_d}),
 \qquad E_\Xi=\begin{pmatrix}0&0\\\Xi&0\end{pmatrix},
\]
and put
\begin{equation}\label{eq:cube}
 a=\frac1{100}\min\{n^{-1/2},R/\ell_0\},\quad
 b=a^2km,\quad h=1+b,\quad
 A_\Xi=D+aE_\Xi,\quad \sigma_\Xi=\frac{A_{\Xi}A_\Xi^\dagger}{h}.
\end{equation}
Since $\Tr(DE_\Xi^\dagger)=0$ and $\norm{E_\Xi}_{\HS}^2=km$, we have
\[
 \Tr A_\Xi A_\Xi^\dagger
 =\Tr D^2+a\Tr(DE_\Xi^\dagger+E_\Xi D)
       +a^2\Tr E_\Xi E_\Xi^\dagger
 =1+a^2km=h.
\]
Consequently, $\sigma_\Xi$ is positive semidefinite with trace one, and the same normalizing constant applies to every vertex.

We next show that every $\sigma_\Xi$ belongs to $\cB_R(\rho)$. From $a\ell_0\le R/100$ and~\eqref{eq:curvature}, it follows that
\begin{equation}\label{eq:bsmall}
 b=(a\ell_0)^2\frac{k}{s_0^2}\le=0.0006R,
 \qquad 1\le h\le1.0006.
\end{equation}
Leting $D_k=\diag(\sqrt{\mu_1},\ldots,\sqrt{\mu_k})$, expansion of~\eqref{eq:cube} gives
\[
 \sigma_\Xi-\rho_0
 =\frac{a(DE_\Xi^\dagger+E_\Xi D)+a^2E_\Xi E_\Xi^\dagger-b\rho_0}{h}.
\]
The off-diagonal blocks in the first term are $aD_k\Xi^T$ and its adjoint, so this term has trace norm $2a\norm{D_k\Xi^T}_1$. Since row $i$ of $D_k\Xi^T$ has Euclidean norm $\sqrt{m\mu_i}$, the row bound~\eqref{eq:rownorm} implies
\begin{equation}\label{eq:detrow}
 \norm{D_k\Xi^T}_1\le\sqrt m\sum_{i=1}^k\sqrt{\mu_i}=\ell_0
 \quad\text{for every }\Xi.
\end{equation}
The remaining two matrices $a^2E_\Xi E_\Xi^\dagger$ and $b\rho_0$ are positive semidefinite and each has trace $b$. Their difference therefore has trace norm at most $2b$. Applying the triangle inequality and then~\eqref{eq:bsmall}, we obtain
\[
 T(\sigma_\Xi,\rho_0)
 \le\frac{a\ell_0+b}{h}\le0.0106R,
 \qquad
 T(\sigma_\Xi,\rho)
 \le T(\sigma_\Xi,\rho_0)+T(\rho_0,\rho)
 \le0.5106R<R.
\]
This verifies membership for every vertex. We shall later restrict a probability estimate to a typical operator-norm event, but the prior itself will remain uniform on the full cube so that every neighboring pair is available.

\subsection{Adjacent signs are difficult to distinguish}

Consider $\Xi,\Upsilon$ that differ in exactly one entry. By the normalization above, the unit vectors
\[
 \ket{\psi_\Xi}=\vecop(A_\Xi)/\sqrt h,\qquad
 \ket{\psi_\Upsilon}=\vecop(A_\Upsilon)/\sqrt h
\]
purify $\sigma_\Xi,\sigma_\Upsilon$. Since the matrices are real and differ by $2a$ in one entry, we have
\[
 4a^2=\norm{A_\Xi-A_\Upsilon}_{\HS}^2
      =2h-2\Tr A_\Xi^\dagger A_\Upsilon.
\]
Dividing by $2h$ gives
\begin{equation}\label{eq:neighboroverlap}
 \ip{\psi_\Xi}{\psi_\Upsilon}=1-2a^2/h.
\end{equation}
Since partial trace cannot increase trace distance, we may bound the distance between the $n$-copy states by the distance between their tensor-product purifications. Using~\eqref{eq:pureT} and~\eqref{eq:neighboroverlap}, we then have
\begin{align}
 T(\sigma_\Xi^{\otimes n},\sigma_\Upsilon^{\otimes n})
 &\le\sqrt{1-(1-2a^2/h)^{2n}}\notag\\
 &\le2a\sqrt{n/h}\le2a\sqrt n\le1/50,\label{eq:neighborTV}
\end{align}
where the second inequality follows from $1-(1-x)^N\le Nx$ for $0\le x\le1$, and the last two use $h\ge1$ and $a\le1/(100\sqrt n)$. From~\eqref{eq:helstrom} and~\eqref{eq:neighborTV}, it follows that every binary test between these neighboring states has equal-prior average error at least $49/100$. This conclusion holds for any collective measurement on the $n$ copies.

\subsection{Weighted Assouad decoding}

We now apply the weighted form of Assouad's hypercube argument~\cite{Yu}. Given the estimator output $\widehat\sigma$, let $\widehat\Xi$ minimize $T(\widehat\sigma,\sigma_\Upsilon)$ over the finite cube, using a fixed ordering to resolve ties. This rule defines a measurable decoder. Since the true vertex is among the candidates, its minimizing property and the triangle inequality imply
\begin{equation}\label{eq:nearest}
 T(\sigma_\Xi,\sigma_{\widehat\Xi})
 \le T(\sigma_\Xi,\widehat\sigma)
     +T(\widehat\sigma,\sigma_{\widehat\Xi})
 \le2T(\widehat\sigma,\sigma_\Xi).
\end{equation}
Let $\Xi$ be uniform on the full cube, and define the weighted number of incorrectly decoded signs by
\begin{equation}\label{eq:W}
 N_\mu(\Xi,\widehat\Xi)
 =\sum_{i=1}^k\sqrt{\mu_i}
    \sum_{j=1}^m\mathbf1\{\Xi_{ji}\ne\widehat\Xi_{ji}\},
 \qquad N_{\mu,\max}=ms_0.
\end{equation}
Fix a coordinate $(j,i)$ and fix all the other signs. Under the uniform prior, the two remaining vertices have equal probability. The decoded value $\widehat\Xi_{ji}$ is a binary test for this pair, so~\eqref{eq:neighborTV} and~\eqref{eq:helstrom} give an average error of at least $49/100$. Averaging over the other signs preserves this lower bound. Multiplying by $\sqrt{\mu_i}$ and summing over $(j,i)$, we obtain
\begin{equation}\label{eq:EW}
 \E N_\mu\ge\frac{49}{100}N_{\mu,\max},
\end{equation}
where the expectation includes the random vertex and the measurement outcome.

To obtain a probability bound, set
$\pi_*=\Prb\{N_\mu\ge N_{\mu,\max}/10\}$. Since $N_\mu/N_{\mu,\max}$ is at most $1/10$ outside this event and at most one on it, we have
\[
 \frac{\E N_\mu}{N_{\mu,\max}}
 \le\frac{1-\pi_*}{10}+\pi_*.
\]
Combining this inequality with~\eqref{eq:EW}, we have that
\begin{equation}\label{eq:Wprob}
 \Prb\{N_\mu\ge N_{\mu,\max}/10\}\ge\frac{13}{30},
\end{equation}
as desired.
The bound~\eqref{eq:Wprob} will give a lower bound on the probability of large estimation error, in addition to the expected-loss lower bound.

\subsection{The trace-norm dual witness retains all spectral scales}\label{sec:dualwitness}

We use a spectrally weighted version of the trace-distance Hamming separation of Acharya \emph{et al.}~\cite[Lemma 4.7]{ADLY}, whose dual witness is built from the true sign parameter alone. The square-root parametrization places the weights $\sqrt{\mu_i}$ in an off-diagonal block, and testing that block against the true sign matrix retains their full sum.

The top-right block of $\sigma_\Xi-\sigma_\Upsilon$ is $(a/h)D_k(\Xi-\Upsilon)^T$. Applying Lemma~\ref{lem:block} to this difference gives
\[
 T(\sigma_\Xi,\sigma_\Upsilon)
 \ge\frac a h\norm{D_k(\Xi-\Upsilon)^T}_1.
\]
Since $\Xi^T/\norm\Xi_{\op}$ has operator norm one, it is an admissible test matrix in~\eqref{eq:duality}. We therefore have
\begin{align}
 \norm{D_k(\Xi-\Upsilon)^T}_1
 &\ge\frac{\sum_{i,j}\sqrt{\mu_i}\Xi_{ji}(\Xi_{ji}-\Upsilon_{ji})}{\norm\Xi_{\op}}\notag\\
 &=\frac{2N_\mu(\Xi,\Upsilon)}{\norm\Xi_{\op}},\label{eq:weighteddual}
\end{align}
where the equality follows because $\Xi_{ji}(\Xi_{ji}-\Upsilon_{ji})$ is zero when the signs agree and is two when they disagree. In particular, all terms in the sum are nonnegative.

Taking $\Upsilon=\widehat\Xi$ and combining~\eqref{eq:nearest} with~\eqref{eq:weighteddual}, we obtain
\[
 T(\widehat\sigma,\sigma_\Xi)
 \ge\frac{aN_\mu(\Xi,\widehat\Xi)}{h\norm\Xi_{\op}}.
\]
It follows that, on the event $\norm\Xi_{\op}\le10\sqrt m$,
\begin{equation}\label{eq:lossW}
 T(\widehat\sigma,\sigma_\Xi)
 \ge\frac{aN_\mu(\Xi,\widehat\Xi)}{10h\sqrt m}.
\end{equation}
The denominator depends only on the true sign matrix, so no operator-norm bound on the decoder's output is needed, as in~\cite{ADLY}. The numerator retains the individual weights $\sqrt{\mu_i}$; this is what allows the lower bound to involve their full sum.

\subsection{An elementary bound for the sign matrix norm}

We next bound the probability of the event excluded in~\eqref{eq:lossW}. A Euclidean unit sphere in $\mathbb R^q$ has a $1/4$-net with at most $9^q$ points. Indeed, take a maximal collection of points at pairwise distance greater than $1/4$. The balls of radius $1/8$ centered at these points are disjoint and contained in the ball of radius $9/8$, so comparison of volumes bounds the collection's size by $9^q$. Its maximality gives the net property.

For fixed unit vectors $x\in\mathbb R^m$ and $y\in\mathbb R^k$, independence of the signs gives
\[
 \E e^{u x^T\Xi y}
 =\prod_{j,i}\cosh(ux_jy_i)
 \le\exp\left(\frac{u^2}{2}\sum_{j,i}x_j^2y_i^2\right)
 =e^{u^2/2},
\]
where we used $\cosh z\le e^{z^2/2}$ and $\sum_{j,i}x_j^2y_i^2=1$. Applying Markov's inequality with $u=t$ bounds the upper tail by $e^{-t^2/2}$. Applying the same argument to $-x^T\Xi y$ therefore yields
\[
 \Prb\{|x^T\Xi y|\ge t\}\le2e^{-t^2/2}.
\]
Choose unit vectors $x,y$ attaining $|x^T\Xi y|=\norm\Xi_{\op}$, and choose net points $x_0,y_0$ within distance $1/4$ of them. Since
\[
 |x^T\Xi y-x_0^T\Xi y_0|
 \le\norm\Xi_{\op}\bigl(\norm{x-x_0}_2+\norm{y-y_0}_2\bigr)
 \le\tfrac12\norm\Xi_{\op},
\]
the event $\norm\Xi_{\op}>10\sqrt m$ implies that some pair of net points satisfies $|x_0^T\Xi y_0|>5\sqrt m$. A union bound over at most $9^{m+k}$ pairs then gives
\begin{align}
 \Prb\{\norm\Xi_{\op}>10\sqrt m\}
 &\le2\,9^{m+k}e^{-25m/2}<10^{-3}.\label{eq:signnorm}
\end{align}

\subsection{Complete the lower bound}

From~\eqref{eq:Wprob} and~\eqref{eq:signnorm}, the union bound implies that
\[
 \Prb\{N_\mu\ge N_{\mu,\max}/10,
          \ \norm\Xi_{\op}\le10\sqrt m\}
 \ge13/30-10^{-3}>0.43.
\]
No independence between these two events is required. On their intersection, $N_{\mu,\max}=ms_0$, $\ell_0=\sqrt m\,s_0$, and~\eqref{eq:lossW} give
\[
 T(\widehat\sigma,\sigma_\Xi)
 \ge\frac{ams_0}{100h\sqrt m}
 =\frac{a\ell_0}{100h}
 \ge\frac{a\ell_0}{200},
\]
where the last inequality uses $h\le1.0006<2$. Moreover, from~\eqref{eq:cube} and~\eqref{eq:Llower}, it follows that
\[
 a\ell_0=\frac1{100}\min\{\ell_0/\sqrt n,R\}
 \ge\frac1{900}\min\{\sqrt{\Gamma/n},R\}
 =\frac{q_n}{900}.
\]
Consequently, the prior-averaged probability of loss at least $q_n/180000$ is greater than $0.43$. Since this average is over finitely many vertices, at least one vertex has a probability greater than $0.43$. Because $10^{-6}<1/180000$, the same vertex satisfies the strict-tail bound~\eqref{eq:stronglower}.

The same event also gives the expected-loss bound. Since every vertex belongs to $\cB_R(\rho)$, averaging first over the uniform prior yields
\[
 \sup_{\sigma\in\cB_R(\rho)}\E_\sigma T(\widehat\sigma,\sigma)
 \ge\E T(\widehat\sigma,\sigma_\Xi)
 \ge\frac{0.43}{180000}q_n>10^{-6}q_n.
\]
As the measurement and estimator were arbitrary, taking the infimum over them proves the expected minimax lower bound.

Finally, every procedure has worst-case failure probability greater than $1/3$ at threshold $10^{-6}q_n$, by~\eqref{eq:stronglower}. Since decreasing the threshold cannot decrease this failure probability, no threshold below $10^{-6}q_n$ is feasible in the definition of the confidence radius. Taking the infimum over feasible thresholds and then over procedures gives
$\cQ_n(\rho,R,1/3)\ge10^{-6}q_n$. This reasoning does not require either infimum to be attained.

\begin{remark}[Relation to known rank-constrained lower bounds]\label{rem:rankvariant}
The full-dimensional specialization recovers the known $\Omega(d^2/\varepsilon^2)$ bound of Haah \emph{et al.}~\cite{HHJWY}. Applied to a normalized projector without regularization, the same construction preserves rank; Appendix~\ref{app:rank} records this variant. Its local lower bound is also implied by the proof of~\cite[Proposition 4.13]{NZ}, and its global consequence is the known $\Omega(rd/\varepsilon^2)$ bound~\cite{SSW,NZ}. Unlike their projector constructions, our rank-preserving states generally have unequal nonzero eigenvalues.
\end{remark}

\section{Upper bounds from purification-based tomography}\label{sec:upper}

We now prove the upper bounds by analyzing the existing purification/Hayashi procedure. The measurement uses neither the center nor its spectrum. The rates obtained are implied by earlier moment and concentration bounds for this procedure (Remark~\ref{rem:priorupper}); the direct argument below is short and gives explicit constants. We first derive a bound for each input state, and then use the known center and radius to make this bound uniform over the ball.

\subsection{The random-purification channel}

The random-purification channel of Tang, Wright, and Zhandry~\cite[Lemma 2.11]{TWZ}, restated as Theorem 1.1 of Pelecanos, Spilecki, Tang, and Wright~\cite{PSTW}, gives a quantum channel $\mathcal P_{n,d}$ satisfying
\begin{equation}\label{eq:purify}
 \mathcal P_{n,d}(\sigma^{\otimes n})
  =\E_A\left[\bigl(\ket{\vecop(A)}\bra{\vecop(A)}\bigr)^{\otimes n}\right],
 \qquad AA^\dagger=\sigma,
\end{equation}
where the expectation is over the unitarily invariant distribution of purifications in $\mathbb C^d\otimes\mathbb C^d$. The channel depends only on $n$ and $d$. All $n$ pure states on the right side share the same random purification, and the estimator is not told which purification was chosen.

The original lemma allows an ancillary dimension $q$ whenever $\operatorname{rank}\sigma\le q$. We take $q=d$ so that the same procedure applies to every state, including singular states, without knowing its rank. Girardi, Mele, and Lami~\cite{GML} subsequently gave a simplified construction and extensions of this channel. Equation~\eqref{eq:purify} produces a mixture of tensor powers, and does not assert that one can deterministically purify an unknown state. In the analysis below we condition on $A$ and obtain bounds that hold for every $A$; averaging then gives the same bounds for the channel output.

\subsection{The estimator and the exact outcome law}

Let $D_*=d^2$ be the dimension of the purification space. After applying~\eqref{eq:purify}, we use Hayashi's covariant POVM~\cite{Hayashi}
\begin{equation}\label{eq:hayashi}
 \left\{\binom{n+D_*-1}{n}\bigl(\ket v\bra v\bigr)^{\otimes n}\,dv\right\},
\end{equation}
where $dv$ denotes normalized Haar measure on pure states. Let $\Pi_{\mathrm{sym}}$ be the projector onto the symmetric subspace of $(\mathbb C^{D_*})^{\otimes n}$. The Haar moment identity~\cite[Equation (9)]{PSTW}
\[
 \int\bigl(\ket v\bra v\bigr)^{\otimes n}dv
 =\frac{\Pi_{\mathrm{sym}}}{\binom{n+D_*-1}{n}}
\]
shows that the operators in~\eqref{eq:hayashi} integrate to $\Pi_{\mathrm{sym}}$. The \textcolor{red}{output} in~\eqref{eq:purify} is supported on the symmetric subspace, so adding the single effect $I-\Pi_{\mathrm{sym}}$ completes the POVM without changing the outcome law.

Given the outcome $v=\vecop(V)$, define
\begin{equation}\label{eq:estimate}
 \widehat\sigma=VV^\dagger.
\end{equation}
Since $\norm V_{\HS}=1$, the output is a density matrix. This is the existing $\mathrm{Mix}(\mathrm{Hayashi})$ procedure with ancillary dimension $d$.

We next describe its outcome law. Condition on a purification $A$ in~\eqref{eq:purify} and put $a_0=\vecop(A)$. By~\eqref{eq:hayashi}, the outcome has density proportional to $|\ip{a_0}{v}|^{2n}$ with respect to $dv$. Its squared overlap with $a_0$ therefore has the $\operatorname{Beta}(n+1,D_*-1)$ distribution, as in the proof of Proposition 5.1 in~\cite[Section 5.1]{PSTW}, while its normalized component orthogonal to $a_0$ remains Haar distributed and independent of the overlap. We use an equivalent Gaussian form of this law. Let $X$ and $G$ be independent random variables with
\begin{equation}\label{eq:gaussianlaw}
 X\sim\GammaDist(n+1,1),\qquad
 \vecop(G)\sim\mathrm{CN}(0,I-a_0a_0^\dagger),\qquad Q=\norm G_{\HS}^2,
\end{equation}
where, for the positive integer shapes used here, $\GammaDist(\alpha,1)$ has density $x^{\alpha-1}e^{-x}/(\alpha-1)!$ on $x>0$. A standard complex Gaussian coordinate has independent real and imaginary parts with law $N(0,1/2)$. Thus $G$ is a standard complex Gaussian matrix on the orthogonal complement of $A$, so that
\[
 Q\sim\GammaDist(D_*-1,1),\qquad \Tr A^\dagger G=0,
\]
and $G/\sqrt Q$ is Haar distributed on the unit sphere of that complement and independent of $Q$. Since $X/(X+Q)$ then has the $\operatorname{Beta}(n+1,D_*-1)$ distribution, the outcome matrix has, up to a global phase, the exact distribution
\begin{equation}\label{eq:exactV}
 V=\frac{\sqrt X A+G}{\sqrt{X+Q}}.
\end{equation}
The formula is exact for finite $n$ and does not use an asymptotic approximation.

\subsection{Separate the linear and quadratic errors}

Since $AA^\dagger=\sigma$, expanding~\eqref{eq:exactV} gives
\begin{equation}\label{eq:expandV}
 \widehat\sigma-\sigma
 =\frac{\sqrt X(AG^\dagger+GA^\dagger)+GG^\dagger-Q\sigma}{X+Q}.
\end{equation}
Set $f(G)=\norm{AG^\dagger+GA^\dagger}_1$. Both $GG^\dagger$ and $Q\sigma$ are positive with trace $Q$, so their difference has trace norm at most $2Q$. Applying the triangle inequality to~\eqref{eq:expandV}, and then using $X+Q\ge X$, we obtain
\begin{equation}\label{eq:basicupper}
 T(\widehat\sigma,\sigma)
 \le\frac{\sqrt X f(G)+2Q}{2(X+Q)}
 \le\frac{f(G)}{2\sqrt X}+\frac QX.
\end{equation}
We bound the linear term through the spectrum of $\sigma$ and the quadratic term through the scalar $Q$.

Let $A=\sum_i\sqrt{\nu_i}u_iv_i^\dagger$ be a singular value decomposition, where the $\nu_i$ are the eigenvalues of $\sigma$. Since a rank-one matrix $xy^\dagger$ has trace norm $\norm x_2\norm y_2$, the triangle inequality gives
\[
 AG^\dagger=\sum_i\sqrt{\nu_i}\,u_i(Gv_i)^\dagger,
 \qquad \norm{AG^\dagger}_1\le\sum_i\sqrt{\nu_i}\norm{Gv_i}_2.
\]
The covariance in~\eqref{eq:gaussianlaw} is bounded above by the identity. Since the map $G\mapsto Gv_i$ has squared Hilbert--Schmidt norm $d$, it follows that $\E\norm{Gv_i}_2^2\le d$. Therefore $\E\norm{Gv_i}_2\le\sqrt d$ by Cauchy--Schwarz. Combining these bounds with $\norm{GA^\dagger}_1=\norm{AG^\dagger}_1$, we obtain
\begin{equation}\label{eq:meanf}
 \E f(G)\le2\sqrt d\sum_i\sqrt{\nu_i}=2\sqrt d\,S(\sigma).
\end{equation}
The right side depends only on $AA^\dagger$, so the bound holds uniformly in the purification $A$.

\subsection{Expected risk of the universal estimator}

\begin{proposition}[Statewise expected bound]\label{prop:stateupper}
For every $\sigma\in\cD_d$ and every $n\ge1$, the proper estimator~\eqref{eq:estimate} satisfies
\begin{equation}\label{eq:stateexpected}
 \E_\sigma T(\widehat\sigma,\sigma)
 \le\sqrt{\frac dn}\,S(\sigma)+\frac{d^2-1}{n}.
\end{equation}
\end{proposition}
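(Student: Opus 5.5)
We start from the pointwise bound \eqref{eq:basicupper}, conditioned on a fixed purification $A$:
\[
 T(\widehat\sigma,\sigma)\le\frac{f(G)}{2\sqrt X}+\frac QX .
\]
We then take expectations using the independence of $X$ from $(G,Q)$ in \eqref{eq:gaussianlaw}. Since the resulting bound will not depend on $A$, averaging over the random purification in \eqref{eq:purify} gives the same bound for the actual channel output. The whole proof therefore reduces to two elementary Gamma-moment computations combined with the mean bound \eqref{eq:meanf}.

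\textbf{Linear term.} By independence,
\[
 \E\frac{f(G)}{2\sqrt X}=\tfrac12\E f(G)\,\E X^{-1/2}.
\]
By \eqref{eq:meanf} the first factor satisfies $\tfrac12\E f(G)\le\sqrt d\,S(\sigma)$.

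For the second factor we need $\E X^{-1/2}\le n^{-1/2}$ with $X\sim\GammaDist(n+1,1)$. Jensen's inequality applied to the concave map $t\mapsto\sqrt t$ gives $\E X^{-1/2}\le(\E X^{-1})^{1/2}$. A direct integral shows $\E X^{-1}=\Gamma(n)/\Gamma(n+1)=1/n$. Together these give the term $\sqrt{d/n}\,S(\sigma)$.

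\textbf{Quadratic term.} Again by independence, $\E(Q/X)=\E Q\cdot\E X^{-1}$. Here $\E Q=D_*-1=d^2-1$, because $Q\sim\GammaDist(D_*-1,1)$, and $\E X^{-1}=1/n$ as above. This yields exactly $(d^2-1)/n$.

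Adding the two contributions proves \eqref{eq:stateexpected}. The only point needing care is the negative moment of $X$. The shape parameter $n+1\ge2$ makes $\E X^{-1}$ finite, and $n\ge1$ makes the formula $\E X^{-1}=1/n$ valid. We use the crude bound $X+Q\ge X$ from \eqref{eq:basicupper} rather than working with the coupled ratio $1/(X+Q)$, precisely so that independence factorizes every expectation. There is no genuine obstacle beyond checking these moment identities.
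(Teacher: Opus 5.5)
Your proposal is correct and matches the paper's proof: condition on $A$, factor both terms of \eqref{eq:basicupper} using the independence of $X$ from $G$ (hence from $Q$), and combine \eqref{eq:meanf} with $\E X^{-1}=1/n$, $\E X^{-1/2}\le n^{-1/2}$ and $\E Q=d^2-1$. Your Jensen step for $\E X^{-1/2}$ is the same estimate the paper obtains by Cauchy--Schwarz, and the bound is uniform in $A$, so averaging over the purification finishes the proof.
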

\begin{proof}
Since $X\sim\GammaDist(n+1,1)$, integration of its density gives $\E X^{-1}=1/n$, and Cauchy--Schwarz then gives $\E X^{-1/2}\le n^{-1/2}$. Also, $Q\sim\GammaDist(d^2-1,1)$ implies $\E Q=d^2-1$. As $X$ is independent of $G$ and $Q$, it follows from~\eqref{eq:basicupper} and~\eqref{eq:meanf} that
\[
 \E\,T(\widehat\sigma,\sigma)
 \le\tfrac12(\E f(G))(\E X^{-1/2})+(\E Q)(\E X^{-1})
 \le\sqrt{\frac dn}\,S(\sigma)+\frac{d^2-1}{n}.
\]
This bound holds conditionally on every $A$. Averaging over the purifications in~\eqref{eq:purify} proves~\eqref{eq:stateexpected}.
\end{proof}

The next proposition records the exact bias of~\eqref{eq:estimate}, a direct consequence of the known unbiasedness of the debiased outcome~\cite[Lemma 5.7]{PSTW}, and the resulting lower bound on its risk.

\begin{proposition}[Exact bias and scope of the statewise bound]\label{prop:bias}
For the estimator~\eqref{eq:estimate},
\begin{equation}\label{eq:exactbias}
 \E_\sigma\widehat\sigma-\sigma
 =\frac{d(I-d\sigma)}{n+d^2}.
\end{equation}
In particular, if $\sigma=P_r/r$ for a rank-$r$ projector, then
\begin{equation}\label{eq:biaslower}
 \E_\sigma T(\widehat\sigma,\sigma)
 \ge\frac{d(d-r)}{n+d^2}.
\end{equation}
\end{proposition}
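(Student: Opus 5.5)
The plan is to compute $\E_\sigma\widehat\sigma$ directly from the exact Gaussian representation~\eqref{eq:exactV}, conditionally on the purification $A$. We then average over $A$ and finally use convexity of the trace norm for~\eqref{eq:biaslower}. The global phase in~\eqref{eq:exactV} does not affect $VV^\dagger$, so we may work with $V=(\sqrt XA+G)/\sqrt{X+Q}$.

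\emph{Expanding and killing the cross term.} We expand
\[
 VV^\dagger=\frac{X\sigma+\sqrt X(AG^\dagger+GA^\dagger)+GG^\dagger}{X+Q}.
\]
The cross term has conditional mean zero. The law of $G$ in~\eqref{eq:gaussianlaw} is invariant under $G\mapsto-G$, while $X$ and $Q=\norm G_{\HS}^2$ are unchanged by this sign flip. So the cross term is an odd function of $G$ times an even one.

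\emph{The first term.} Since $X/(X+Q)\sim\operatorname{Beta}(n+1,D_*-1)$, we get $\E[X/(X+Q)]=(n+1)/(n+D_*)$. The first term therefore contributes $\frac{n+1}{n+d^2}\sigma$.

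\emph{The Gaussian term.} We write $GG^\dagger/(X+Q)=\frac{Q}{X+Q}WW^\dagger$ with $W=G/\sqrt Q$. As recorded after~\eqref{eq:gaussianlaw}, $W$ is Haar on the unit sphere of the orthogonal complement of $a_0$ and independent of $Q$; it is also independent of $X$. Hence $W$ is independent of $Q/(X+Q)$, and the expectation factorizes.

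The scalar factor has mean $(D_*-1)/(n+D_*)$. For the matrix factor, $\E\,\vecop(W)\vecop(W)^\dagger=(I-a_0a_0^\dagger)/(D_*-1)$. Applying the partial trace identity $\Tr_2\ket{\vecop(M)}\bra{\vecop(M)}=MM^\dagger$ from Section~\ref{sec:facts}, linearly, gives $\Tr_2 I_{d^2}=dI_d$ and $\Tr_2a_0a_0^\dagger=\sigma$. Therefore $\E WW^\dagger=(dI-\sigma)/(d^2-1)$. The Gaussian term thus contributes $(dI-\sigma)/(n+d^2)$.

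\emph{Conclusion of~\eqref{eq:exactbias}.} Summing the contributions gives
\[
 \E[\widehat\sigma\mid A]=\frac{n\sigma+dI}{n+d^2}.
\]
This depends only on $\sigma$, so averaging over $A$ in~\eqref{eq:purify} changes nothing. Subtracting $\sigma$ yields~\eqref{eq:exactbias}. As a cross-check, this agrees with the unbiasedness of the debiased outcome~\cite[Lemma 5.7]{PSTW}.

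\emph{The lower bound~\eqref{eq:biaslower}.} By convexity of the trace norm (Jensen's inequality),
\[
 \E_\sigma T(\widehat\sigma,\sigma)\ge T(\E_\sigma\widehat\sigma,\sigma)=\frac{d}{2(n+d^2)}\norm{I-d\sigma}_1 .
\]
For $\sigma=P_r/r$, the matrix $I-(d/r)P_r$ has eigenvalue $1-d/r$ with multiplicity $r$ and eigenvalue $1$ with multiplicity $d-r$. Its trace norm is $(d-r)+(d-r)=2(d-r)$, which gives $d(d-r)/(n+d^2)$.

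\emph{Main obstacle.} The only delicate step is justifying the factorization in the Gaussian term. This uses that the direction $W$ is independent of the pair $(X,Q)$, not merely of $Q$. It holds because $X$ is independent of $G$, and for a Gaussian on a subspace the direction $G/\sqrt Q$ is independent of the norm. The rest is bookkeeping.
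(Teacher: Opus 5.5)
Your proof is correct. It reaches the same conditional identity $\E[\widehat\sigma\mid A]=(n\sigma+dI)/(n+d^2)$ as the paper, but by a different route.

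The paper computes no moments. It conditions on $A$, quotes the unbiasedness of the debiased Hayashi outcome $\frac{n+D_*}{n}\ket v\bra v-\frac1n I_{D_*}$ from \cite[Lemma 5.7]{PSTW}, and takes the partial trace over the ancilla. Its Jensen step for \eqref{eq:biaslower} is identical to yours.

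You instead derive the first moment from the exact Gaussian law \eqref{eq:exactV}. Sign symmetry removes the cross term, the Beta mean handles the signal term, and the noise term follows from the independence of $W$ from $(X,Q)$ together with its isotropy on the complement of $a_0$. Before the partial trace, your computation gives $\E\ket v\bra v=(n\,a_0a_0^\dagger+I)/(n+D_*)$, which reproves the cited unbiasedness. So your version makes the proposition self-contained relative to the representation the paper already builds for its upper bounds. It also carries over unchanged to the $q$-dimensional ancilla of Remark~\ref{rem:ancilladimension}: replace $D_*$ by $dq$ and $\Tr_2 I$ by $qI_d$, and you recover \eqref{eq:rectangularbias}.

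The paper's route is shorter and needs only the first moment of Hayashi's measurement, not the Gaussian representation, at the price of an external citation.

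One small point to make explicit: the cross term is integrable, indeed bounded, since its trace norm is at most $2\sqrt{XQ}/(X+Q)\le1$. This is what lets your odd-symmetry argument conclude that its mean vanishes.
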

\begin{proof}
Condition on $A$ in~\eqref{eq:purify}. For Hayashi's measurement on $n$ copies of the pure state $\ket{a_0}$ in dimension $D_*$, the debiased outcome $\frac{n+D_*}{n}\ket v\bra v-\frac1nI_{D_*}$ is an unbiased estimator of $\ket{a_0}\bra{a_0}$~\cite[Lemma 5.7 and Corollary 5.9]{PSTW}. Taking the partial trace over the $d$-dimensional ancilla gives $\frac{n+d^2}{n}\E\widehat\sigma-\frac dnI=\sigma$, that is,
\[
 \E\widehat\sigma=\frac{n\sigma+dI}{n+d^2}.
\]
The last expression is independent of $A$, so averaging proves~\eqref{eq:exactbias}. By convexity of the trace norm,
\[
 \E_\sigma T(\widehat\sigma,\sigma)
 \ge\tfrac12\norm{\E_\sigma\widehat\sigma-\sigma}_1
 =\frac{d}{2(n+d^2)}\norm{I-d\sigma}_1.
\]
When $\sigma=P_r/r$, the eigenvalues of $I-d\sigma$ are $1-d/r$ on the range of $P_r$ and $1$ on its orthogonal complement. Thus $\norm{I-d\sigma}_1=2(d-r)$, which gives~\eqref{eq:biaslower}.
\end{proof}

\begin{remark}[The role of the ancillary dimension]\label{rem:ancilladimension}
Suppose a rank bound is known and use an ancillary dimension $q$ with $\operatorname{rank}\sigma\le q\le d$. The same construction uses $d\times q$ matrices $A,G$, with $AA^\dagger=\sigma$, purification dimension $dq$, and $Q\sim\GammaDist(dq-1,1)$. The unbiasedness used in the proof of Proposition~\ref{prop:bias}, now in dimension $dq$ with a $q$-dimensional ancilla, yields
\begin{equation}\label{eq:rectangularbias}
 \E_\sigma\widehat\sigma_q=\frac{n\sigma+qI}{n+dq},
 \qquad
 \E_\sigma\widehat\sigma_q-\sigma=\frac{q(I-d\sigma)}{n+dq}.
\end{equation}
The linear estimate~\eqref{eq:meanf} is unchanged: for each right singular vector $v_i\in\mathbb C^q$, the vector $Gv_i$ still has $d$ coordinates and $\E\norm{Gv_i}_2^2\le d$. Replacing $\E Q=d^2-1$ by $dq-1$ in the proof of Proposition~\ref{prop:stateupper} consequently gives
\begin{equation}\label{eq:knownrankexpected}
 \E_\sigma T(\widehat\sigma_q,\sigma)
 \le\sqrt{\frac dn}\,S(\sigma)+\frac{dq-1}{n}.
\end{equation}
In particular, for a known rank bound $r$, take $q=r$ and use $S(\sigma)\le\sqrt r$ to obtain
\[
 \E_\sigma T(\widehat\sigma_r,\sigma)
 \le\sqrt{\frac{dr}{n}}+\frac{dr}{n}.
\]
At $\sigma=P_r/r$, the bias in trace distance equals $r(d-r)/(n+dr)$; when $n\ge dr$, this is at most $dr/n\le\sqrt{dr/n}$. Thus the $d^2/n$ remainder concerns the choice $q=d$ in the particular estimator~\eqref{eq:estimate}. It is not a limitation of purification/Hayashi tomography with a suitable rank bound, nor does it rule out a better procedure when the rank is unknown.
\end{remark}

Proposition~\ref{prop:bias} shows why the remainder cannot simply be dropped from a uniform bound for~\eqref{eq:estimate}. At a pure state and $n=d^2$, its expected error is at least $(d-1)/(2d)$, whereas~\eqref{eq:knownrankexpected} with $q=1$ gives expected error of order $d^{-1/2}$. For a general spectrum, $n\ge d^3/S(\sigma)^2$ implies $d^2/n\le\sqrt{d/n}\,S(\sigma)$, so this is a sufficient regime in which the displayed spectral term dominates the remainder. It is not asserted to be a necessary threshold. For the local problem, we will return the center at small sample sizes and use the radius term to absorb $d^2/n$ in the remaining regime.

\subsection{Exponential tails with a dimension-independent confidence term}

We first control $f(G)$. By the reverse triangle inequality and~\eqref{eq:holder}, for any two matrices $G,G'$ we have
\begin{align}
 |f(G)-f(G')|
 &\le\norm{A(G-G')^\dagger+(G-G')A^\dagger}_1\notag\\
 &\le2\norm A_{\HS}\norm{G-G'}_{\HS}
 =2\norm{G-G'}_{\HS}.\label{eq:lipschitz}
\end{align}
The last equality uses $\norm A_{\HS}=1$. Since the map from real standard-Gaussian coordinates to $G$ has operator norm $1/\sqrt2$,~\eqref{eq:lipschitz} shows that $f$ is $\sqrt2$-Lipschitz in those coordinates. Applying the Gaussian concentration bound~\eqref{eq:gaussconc} from Appendix~\ref{app:gaussian}, and then using~\eqref{eq:meanf}, gives
\begin{equation}\label{eq:fconcentration}
 \Prb\{f(G)>2\sqrt d\,S(\sigma)+2\sqrt t\}\le e^{-t}
 \quad(t>0).
\end{equation}
In particular, the fluctuation increment $2\sqrt t$ has no dimension factor.

We next control $Q$ and $X$. Since $\E e^{Q/2}=2^{D_*-1}$, exponential Markov inequality gives
\begin{equation}\label{eq:qtail}
 \Prb\{Q>2(D_*-1)+2t\}
 \le e^{-(D_*-1)-t}2^{D_*-1}\le e^{-t},
\end{equation}
where the last inequality follows from $\log2<1$. Likewise, $\E e^{-X}=2^{-(n+1)}$ gives
\begin{equation}\label{eq:xtail}
 \Prb\{X<n/2\}\le e^{n/2}2^{-(n+1)}
 =\tfrac12 e^{-(\log2-1/2)n}.
\end{equation}
As $6(\log2-1/2)>1$, the last expression is at most $e^{-t}$ whenever $n\ge6t$.

\begin{proposition}[Statewise high-probability bound]\label{prop:tail}
For every $\sigma\in\cD_d$, every $n\ge1$, and every $t>0$, with probability at least $1-3e^{-t}$,
\begin{equation}\label{eq:statetail}
 T(\widehat\sigma,\sigma)
 \le\sqrt{\frac2n}\bigl(\sqrt d\,S(\sigma)+\sqrt t\bigr)
       +\frac{4(d^2+t)}{n}.
\end{equation}
\end{proposition}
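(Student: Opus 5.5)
The plan is to condition on the random purification $A$ from~\eqref{eq:purify}, intersect three good events for the independent variables $f(G)$, $Q$, and $X$, and insert their bounds into the deterministic inequality~\eqref{eq:basicupper}. A short case split on whether $n\ge6t$ handles the range in which the tail bound~\eqref{eq:xtail} for $X$ is not yet useful.

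\emph{Case $n\ge6t$.} Fix $A$ and consider the three events
\[
 \{f(G)\le2\sqrt d\,S(\sigma)+2\sqrt t\},\qquad
 \{Q\le2(d^2-1)+2t\},\qquad
 \{X\ge n/2\}.
\]
By~\eqref{eq:fconcentration},~\eqref{eq:qtail} (with $D_*=d^2$) and~\eqref{eq:xtail}, the last valid because $n\ge6t$, each event fails with probability at most $e^{-t}$. A union bound then shows that all three hold simultaneously with probability at least $1-3e^{-t}$; no independence is needed for this step.

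On this intersection,~\eqref{eq:basicupper} gives
\[
 \frac{f(G)}{2\sqrt X}\le\frac{2\sqrt d\,S(\sigma)+2\sqrt t}{2\sqrt{n/2}}
 =\sqrt{\frac2n}\bigl(\sqrt d\,S(\sigma)+\sqrt t\bigr).
\]
It also gives
\[
 \frac QX\le\frac{2(d^2-1)+2t}{n/2}\le\frac{4(d^2+t)}{n}.
\]
Adding the two bounds yields~\eqref{eq:statetail}. The bound~\eqref{eq:meanf}, and hence~\eqref{eq:fconcentration}, depends only on $AA^\dagger=\sigma$, so the conditional probability bound holds for every $A$. Averaging over the purifications in~\eqref{eq:purify} then transfers it to the channel output.

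\emph{Case $n<6t$.} This is the step I expect to be the only real obstacle, since~\eqref{eq:xtail} cannot give probability $e^{-t}$ when $n$ is small relative to $t$. Here I will show that the right side of~\eqref{eq:statetail} is at least $1$, so the bound holds with probability one because $T\le1$. Setting $\tau=t/n>1/6$, the right side is at least
\[
 \sqrt{2\tau}+4\tau>\sqrt{1/3}+2/3>1.
\]
This completes both cases and proves the proposition.
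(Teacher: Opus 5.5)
Your proof is correct and follows the paper's argument essentially step for step. You use the same three-event union bound when $n\ge6t$, substituted into~\eqref{eq:basicupper}, and the same deterministic fallback $\sqrt{1/3}+2/3>1$ when $n<6t$, with the bound conditional on $A$ and then averaged over the purification.
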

\begin{proof}
Suppose first that $n\ge6t$. By~\eqref{eq:fconcentration}--\eqref{eq:xtail} and the union bound, with probability at least $1-3e^{-t}$ we have
\[
 f(G)\le2\sqrt d\,S(\sigma)+2\sqrt t,\qquad
 Q\le2(d^2-1)+2t,\qquad X\ge n/2.
\]
Substituting these three inequalities into~\eqref{eq:basicupper} proves~\eqref{eq:statetail}. If $n<6t$, the two terms involving $t$ on its right side already exceed $\sqrt{1/3}+2/3>1$. Since the trace distance between states is at most one,~\eqref{eq:statetail} then holds deterministically. The argument is uniform in $A$, so averaging over~\eqref{eq:purify} completes the proof.
\end{proof}
The probability bound is nonvacuous when $t>\log3$; no restriction on $t$ is needed for the statement.

\subsection{Make the estimator uniform over the known ball}\label{sec:localize}

We now compare $S(\sigma)$ with the spectrum of the known center. The Lidskii--Mirsky inequality~\cite{Bhatia} states that the ordered eigenvalues of two Hermitian matrices satisfy
\[
 \sum_i|\lambda_i(\sigma)-\lambda_i(\rho)|\le\norm{\sigma-\rho}_1.
\]
It applies without a commutativity assumption; Appendix~\ref{app:eigen} also gives a proof through eigenvalue variation. Since $|\sqrt x-\sqrt y|\le\sqrt{|x-y|}$ for nonnegative $x,y$, for every $\sigma\in\cB_R(\rho)$ we obtain
\begin{equation}\label{eq:Sloc}
 S(\sigma)\le S(\rho)+\sum_i\sqrt{|\lambda_i(\sigma)-\lambda_i(\rho)|}
 \le S(\rho)+\sqrt{2dR},
\end{equation}
where the second inequality follows from Cauchy--Schwarz, the eigenvalue inequality, and $\norm{\sigma-\rho}_1\le2R$. Multiplying~\eqref{eq:Sloc} by $\sqrt d$ and applying Cauchy--Schwarz to the two resulting terms gives
\begin{equation}\label{eq:sqrtGamma}
 \sqrt d\,S(\sigma)\le\sqrt{dS(\rho)^2}+\sqrt{2d^2R}\le\sqrt{3\Gamma}.
\end{equation}

For the expected-risk upper bound in Theorem~\ref{thm:main}, return $\rho$ when $n\le\Gamma/R^2$, and use~\eqref{eq:estimate} otherwise. In the first regime, returning $\rho$ gives error at most $R=q_n$. In the second regime, $n>\Gamma/R^2\ge d^2/R$ and $q_n=\sqrt{\Gamma/n}$. Since $n>d^2/R$ and $d^2R\le\Gamma$, we then have
\begin{equation}\label{eq:absorb}
 \frac{d^2}{n}\le d\sqrt{\frac Rn}\le\sqrt{\frac\Gamma n}.
\end{equation}
From~\eqref{eq:stateexpected},~\eqref{eq:sqrtGamma}, and~\eqref{eq:absorb}, it follows that the expected error is at most $(\sqrt3+1)\sqrt{\Gamma/n}$, which proves the expected-risk upper bound.

For the confidence-radius upper bound, fix $0<\delta\le1/3$ and put
\[
 \ell=\log(1/\delta),\qquad \Gamma_\delta=\Gamma+\ell,
 \qquad t=\log(3/\delta).
\]
We return $\rho$ when $n\le \Gamma_\delta/R^2$, and use~\eqref{eq:estimate} otherwise. The rule depends only on the known center, the radius, the sample size, and the requested confidence. In the first regime, its error is at most $R=\min\{R,\sqrt{\Gamma_\delta/n}\}$. In the second regime, Proposition~\ref{prop:tail} and~\eqref{eq:sqrtGamma} give, uniformly over the ball,
\begin{equation}\label{eq:balltail}
 T(\widehat\sigma,\sigma)\le\sqrt{\frac{6\Gamma}{n}}+
     \sqrt{\frac{2t}{n}}+\frac{4(d^2+t)}n
\end{equation}
with failure probability at most $\delta$. Since $\ell\ge\log3$, we have $t\le2\ell$, and Cauchy--Schwarz bounds the two square-root terms in~\eqref{eq:balltail} by $\sqrt{10\Gamma_\delta/n}$. Also, $n>\Gamma_\delta/R^2\ge\Gamma/R^2$ allows us to apply~\eqref{eq:absorb}, whereas $\Gamma_\delta/n<R^2\le1$ gives
\[
 \frac{d^2}{n}\le\sqrt{\frac{\Gamma_\delta}{n}},
 \qquad \frac tn\le\frac{2\Gamma_\delta}{n}\le2\sqrt{\frac{\Gamma_\delta}{n}}.
\]
Substituting these bounds into~\eqref{eq:balltail}, we obtain
\begin{equation}\label{eq:fullconfidenceupper}
 \cQ_n(\rho,R,\delta)
 \le16\min\left\{R,\sqrt{\frac{\Gamma+\log(1/\delta)}n}\right\}.
\end{equation}
 Thus the same statewise measurement proves both upper bounds; only the decision to return the center depends on the loss criterion and confidence level.

\begin{remark}[The upper rates follow from earlier results]\label{rem:priorupper}
The estimator $\mathrm{Mix}(\mathcal A_{\mathrm{GPS}})$ of~\cite{PSTW,PSTWspectrum} with a $d$-dimensional ancilla uses the measurement of this section and outputs the unbiased Hermitian matrix $\widetilde\sigma=\frac{n+d^2}{n}\widehat\sigma-\frac dnI$, which need not be positive. The results cited below are stated for rank-$r$ states and an $r$-dimensional ancilla; we use $r=d$, which is permitted for every $\sigma$ because the purification channel only requires the ancilla dimension to be at least $\operatorname{rank}\sigma$. Put $Z=\widetilde\sigma-\sigma$. Since $\widehat\sigma-\sigma=\frac{n}{n+d^2}Z+\frac{d(I-d\sigma)}{n+d^2}$ and $\norm{I-d\sigma}_1\le2d$, we have $T(\widehat\sigma,\sigma)\le\frac12\norm Z_1+d^2/n$. Existing bounds on $Z$ then give~\eqref{eq:priorupper} and, more precisely, the rates of Propositions~\ref{prop:stateupper} and~\ref{prop:tail}, although none of the three statements below appears in the cited papers. In an eigenbasis of $\sigma$ with eigenvalues $\nu_i$, let
\[
 B=\sqrt{n/d}\sqrt{\sigma+(d/n)I},\qquad
 K_{ij}=\frac{2Z_{ij}}{B_{ii}+B_{jj}},
\]
so that $Z=(BK+KB)/2$ and $\Tr B\le\sqrt{n/d}\,S(\sigma)+d$.

\emph{Expectation.} The second-moment formula~\cite[Theorem 1.7]{PSTW} gives $\E|Z_{ij}|^2\le(\nu_i+\nu_j)/n+d/n^2$: after $\sigma\otimes\sigma$ is subtracted, the terms $-\sigma\otimes\sigma/n$ and the sum-of-squares remainder contribute nonpositively to these entries. Hence $\E|K_{ij}|^2\le4d/n^2$. Since $K$ is Hermitian, $\norm{KB}_1=\norm{BK}_1$, and the row bound~\eqref{eq:rownorm} with Cauchy--Schwarz yields
\[
 \E\norm Z_1\le\E\norm{BK}_1\le\frac{2d}n\Tr B
 \le2\sqrt{d/n}\,S(\sigma)+2d^2/n.
\]

\emph{Constant confidence.} The final unnumbered estimate following Equation (31) in the proof of Theorem 4.8 of~\cite{PSTWspectrum} states, in this notation, that $\norm K_{\op}\le432d/n$ with probability at least $0.99$. On this event, $\norm Z_1\le\norm K_{\op}\Tr B\le432\sqrt{d/n}\,S(\sigma)+432d^2/n$.

\emph{All confidence levels.} Applying~\cite[Proposition 4.7 and Lemma 3.6]{PSTWspectrum} to normalized differences shows that, for Hermitian $H,H'$ of operator norm at most one and $u>0$, the increment $\Tr((H-H')Z)$ exceeds $\sqrt u\,d_2(H,H')+u\,d_1(H,H')$ in absolute value with probability at most $2e^{-u}$, where
\[
 d_2(H,H')^2=\frac{16\Tr\bigl(\sigma(H-H')^2\bigr)}n+\frac{8d\Tr(H-H')^2}{n^2},
 \qquad d_1(H,H')=\frac{2\norm{H-H'}_{\op}}n.
\]
Since $\norm Z_1$ is the supremum of $\Tr(HZ)$ over such $H$, generic chaining~\cite[Theorem 3.5]{Dirksen} applies. Its $\gamma_2$ functional is bounded, through the majorizing-measure theorem~\cite[Remark 3.3(i)]{Dirksen} and~\eqref{eq:rownorm}, by the expected trace norm of a Gaussian matrix with canonical metric $d_2$, and its $\gamma_1$ functional by an entropy integral. This gives, with probability at least $1-e^{-u}$ for $u\ge1$,
\[
 \norm Z_1\le C\left[\sqrt{d/n}\,S(\sigma)+\sqrt{u/n}+(d^2+u)/n\right].
\]
We only outline this deduction, since Proposition~\ref{prop:tail} proves the rate directly with explicit constants.
\end{remark}

\section{Confidence lower bound and copy complexity}\label{sec:confidenceLower}

Section~\ref{sec:lower} gives the dimension and spectrum terms in the converse. We now prove the additional confidence term and combine the two bounds to complete Theorem~\ref{thm:main}. We construct two states in the same ball and apply Le Cam's two-point method~\cite{Yu}, using a diagonal perturbation when no eigenvalue dominates and a coherent rotation otherwise. We then obtain Corollary~\ref{thm:confidence} by inverting the confidence-radius bounds.

\begin{lemma}[Two local alternatives]\label{lem:twopoint}
For every $d\ge2$, $\rho\in\cD_d$, and $0<\varepsilon\le\min\{R/4,1/64\}$, there are $\sigma_+,\sigma_-\in\cB_R(\rho)$ with
\[
 T(\sigma_+,\sigma_-)>2\varepsilon,
 \qquad -\log F(\sigma_+,\sigma_-)\le128\varepsilon^2.
\]
\end{lemma}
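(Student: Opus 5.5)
We construct $\sigma_\pm$ explicitly, splitting on the largest eigenvalue $\lambda_1$ of $\rho$ as announced in the proof strategy. If no eigenvalue dominates, we use a balanced commuting (diagonal) perturbation. If $\lambda_1>3/4$, we use a coherent rotation in the plane of the top two eigenvectors. In both cases we certify membership in $\cB_R(\rho)$ by bounding $T(\sigma_\pm,\rho)$ by half the allowed radius. We lower-bound $T(\sigma_+,\sigma_-)$ directly. For the fidelity we use~\eqref{eq:fidelity}, or a pair of purifications together with Uhlmann's theorem.

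\emph{Diagonal case, $\lambda_1\le3/4$.} Work in an eigenbasis of $\rho$ and split the indices into two sets $\mathcal A,\mathcal B$ whose masses $w_{\mathcal A},w_{\mathcal B}$ are both at least $1/4$. If $\lambda_1\ge1/4$, take $\mathcal A=\{1\}$; then $w_{\mathcal B}=1-\lambda_1\ge1/4$. Otherwise, add indices to $\mathcal A$ greedily until its mass first reaches $1/4$; it then stays below $1/2$, so $w_{\mathcal B}>1/2$. Let $\rho_{\mathcal A},\rho_{\mathcal B}$ be the corresponding restrictions and set
\[
 \sigma_\pm=\rho\pm\eta\Bigl(\frac{\rho_{\mathcal A}}{w_{\mathcal A}}-\frac{\rho_{\mathcal B}}{w_{\mathcal B}}\Bigr),\qquad \eta=2\varepsilon.
\]
These matrices have unit trace. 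They are positive because $\eta/w\le8\varepsilon\le1/8$. They satisfy $T(\sigma_\pm,\rho)=\eta\le R/2$ and $T(\sigma_+,\sigma_-)=2\eta=4\varepsilon>2\varepsilon$.

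Since $\sigma_\pm$ commute, $F$ is the squared Bhattacharyya coefficient of two classical distributions:
\[
 \sqrt F=w_{\mathcal A}\sqrt{1-\eta^2/w_{\mathcal A}^2}+w_{\mathcal B}\sqrt{1-\eta^2/w_{\mathcal B}^2}\ge1-\eta^2\Bigl(\frac1{w_{\mathcal A}}+\frac1{w_{\mathcal B}}\Bigr)\ge1-8\eta^2.
\]
This uses $\sqrt{1-x}\ge1-x$ for $0\le x\le1$. With $\eta=2\varepsilon$ the right side is $1-32\varepsilon^2\ge1/2$. Applying $-\log(1-x)\le2x$ for $x\le1/2$ then gives $-\log F\le2\cdot2\cdot32\varepsilon^2=128\varepsilon^2$.

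\emph{Rotation case, $\lambda_1>3/4$.} Then $\lambda_2<1/4$, so the gap satisfies $\lambda_1-\lambda_2>1/2$. Let $U_\theta$ be the real rotation by angle $\theta$ in the plane spanned by the first two eigenvectors, acting as the identity elsewhere, and put $\sigma_\pm=U_{\pm\theta}\rho U_{\pm\theta}^\dagger$.

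\emph{Membership.} $\norm{U_\theta\rho U_\theta^\dagger-\rho}_1\le2\norm{U_\theta-I}_{\op}\le2\theta$, hence $T(\sigma_\pm,\rho)\le\theta$.

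\emph{Separation.} $\sigma_+-\sigma_-$ is supported on the $2\times2$ block. A direct computation shows this block is $(\lambda_1-\lambda_2)\sin2\theta$ times a traceless symmetric matrix with eigenvalues $\pm1$. Hence $T(\sigma_+,\sigma_-)=(\lambda_1-\lambda_2)\sin2\theta>\tfrac12\sin2\theta$.

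\emph{Fidelity.} The purifications $\vecop(U_{\pm\theta}\sqrt\rho)$ have overlap
\[
 \Tr(\rho\,U_{2\theta})=(\lambda_1+\lambda_2)\cos2\theta+\sum_{i>2}\lambda_i\ge\cos2\theta\ge1-2\theta^2.
\]
By Uhlmann's theorem, $F\ge(1-2\theta^2)^2$, so $-\log F\le8\theta^2$ once $2\theta^2\le1/2$.

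Choose $\theta=4\varepsilon$. Then $T(\sigma_\pm,\rho)\le R$ because $\varepsilon\le R/4$, and $-\log F\le128\varepsilon^2$. Finally, $8\varepsilon\le1/8$ gives $\sin8\varepsilon\ge8\varepsilon(1-(8\varepsilon)^2/6)>4\varepsilon$, so $T(\sigma_+,\sigma_-)>\tfrac12\sin8\varepsilon>2\varepsilon$.

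I expect the main obstacle to be the case boundary. The rotation only separates the two states if the spectral gap is bounded below. The diagonal perturbation needs both blocks to have mass bounded below, both for positivity and to keep the quadratic fidelity cost of order $\varepsilon^2$. The threshold $3/4$, with the singleton-or-greedy split, is the choice I would try first. The remaining work is to check that the numerical constants fit within $R/4$, $1/64$ and $128$ simultaneously, adjusting $\eta$ and $\theta$ by constant factors if a margin fails.
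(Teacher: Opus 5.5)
Your proof is correct and follows essentially the paper's route. It uses the same case split at $\lambda_1=3/4$, the same singleton-or-greedy diagonal mass shift with step $2\varepsilon$, and the same rotation by $\theta=4\varepsilon$ in the top eigenplane, with fidelity bounded through the purification overlap $\Tr(\rho U_{2\theta})$. The only differences are in constants and bookkeeping: you bound $1/w_{\mathcal A}+1/w_{\mathcal B}\le8$ where the paper uses $16/3$, and $T(\sigma_\pm,\rho)\le\norm{U_\theta-I}_{\op}\le\theta$ where the paper uses the exact value $g\sin\theta$. Note also that in the rotation case you only get $T(\sigma_\pm,\rho)\le R$, not the ``half the radius'' claimed in your overview; this is still enough for membership.
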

\begin{proof}
Write $\rho$ in an eigenbasis with eigenvalues $\lambda_1\ge\cdots\ge\lambda_d$. We first construct the two states and then bound their fidelity.

\emph{Case 1: $\lambda_1\le3/4$.} We claim that some subset $I$ of the coordinates has total mass $\beta\in[1/4,3/4]$. If one eigenvalue is at least $1/4$, we take that coordinate. Otherwise, we add coordinates until their total first reaches $1/4$; since each added eigenvalue is less than $1/4$, the total is less than $1/2$. This proves the claim.

Let $u=2\varepsilon$ and define diagonal matrices by
\[
 (\sigma_\pm)_{ii}=\begin{cases}
   (\beta\pm u)\lambda_i/\beta,&i\in I,\\
   (1-\beta\mp u)\lambda_i/(1-\beta),&i\notin I.
 \end{cases}
\]
Since $u\le1/32$ and $\beta,1-\beta\ge1/4$, these matrices are positive semidefinite. Their traces equal $(\beta\pm u)+(1-\beta\mp u)=1$, so they are states. Summing the absolute diagonal differences, we then have
\begin{equation}\label{eq:diagonalseparation}
 T(\sigma_\pm,\rho)=u\le R,
 \qquad T(\sigma_+,\sigma_-)=2u=4\varepsilon.
\end{equation}
Since the states are diagonal in the same basis, their fidelity satisfies
\begin{align}
 \sqrt{F(\sigma_+,\sigma_-)}
 &=\sqrt{\beta^2-u^2}+\sqrt{(1-\beta)^2-u^2}\notag\\
 &\ge1-u^2\left(\frac1\beta+\frac1{1-\beta}\right)
 \ge1-\frac{16}3u^2.\label{eq:diagonalfidelity}
\end{align}
Here the first inequality follows from $\sqrt{1-x}\ge1-x$ for $0\le x\le1$, and the second follows from $\beta(1-\beta)\ge3/16$. The right side of~\eqref{eq:diagonalfidelity} is positive because $u\le1/32$. Squaring and using $(1-x)^2\ge1-2x$, we obtain
\[
 F(\sigma_+,\sigma_-)\ge1-\frac{32}3u^2
 \ge1-64\varepsilon^2.
\]

\emph{Case 2: $\lambda_1>3/4$.} Since $d\ge2$ and the eigenvalues sum to one, we may choose a second eigenvector with $\lambda_2\le1/4$. Put $g=\lambda_1-\lambda_2>1/2$, and define
\[
 U_\theta=\begin{pmatrix}\cos\theta&-\sin\theta\\\sin\theta&\cos\theta\end{pmatrix},
 \qquad \theta=4\varepsilon,
\]
on these two coordinates, with $U_\theta$ equal to the identity elsewhere. Let $\sigma_\pm=U_{\pm\theta}\rho U_{\pm\theta}^\dagger$. These are states by unitary conjugation. The nonzero block of $\sigma_\pm-\rho$ has eigenvalues $\pm g\sin\theta$, while the difference $\sigma_+-\sigma_-$ has eigenvalues $\pm g\sin(2\theta)$. Therefore
\begin{equation}\label{eq:rotationseparation}
 T(\sigma_\pm,\rho)=g\sin\theta\le4\varepsilon\le R,
 \qquad T(\sigma_+,\sigma_-)=g\sin(2\theta).
\end{equation}
Since $0<2\theta=8\varepsilon\le1/8$ and $\sin x\ge2x/\pi$ on $[0,\pi/2]$, it follows from~\eqref{eq:rotationseparation} and $g>1/2$ that
$T(\sigma_+,\sigma_-)>(8/\pi)\varepsilon>2\varepsilon$.

To bound the fidelity, apply $U_{\pm\theta}$ to the first register of the canonical purification $\sum_i\sqrt{\lambda_i}\ket i\ket i$. Since their overlap equals $\Tr(\rho U_{2\theta})$, we have
\begin{align}
 \Tr(\rho U_{2\theta})
 &=1-(\lambda_1+\lambda_2)(1-\cos2\theta)\notag\\
 &\ge1-2\theta^2=1-32\varepsilon^2,\label{eq:rotationoverlap}
\end{align}
where the inequality follows from $\lambda_1+\lambda_2\le1$ and $1-\cos x\le x^2/2$. Fidelity is at least the squared overlap of any pair of purifications. Hence, by~\eqref{eq:rotationoverlap} and $(1-x)^2\ge1-2x$,
$F(\sigma_+,\sigma_-)\ge1-64\varepsilon^2$.

We have now proved the required separation and the bound $F\ge1-64\varepsilon^2$ in both cases. Since $64\varepsilon^2\le1/64<1/2$, applying $-\log(1-x)\le2x$ gives
$-\log F\le128\varepsilon^2$, which completes the proof.
\end{proof}

Suppose a procedure estimates both states in Lemma~\ref{lem:twopoint} with error at most $\varepsilon$ and failure probability at most $\delta$. Decode the state closer to the estimate. Since the two states are more than $2\varepsilon$ apart, the triangle inequality shows that every successful estimate gives the correct decision. The resulting equal-prior testing error is therefore at most $\delta$. From~\eqref{eq:testingF}, it follows that
\[
 \frac{1-\sqrt{1-F(\sigma_+,\sigma_-)^n}}2\le\delta.
\]
Since $1-2\delta>0$, rearranging and squaring gives $F(\sigma_+,\sigma_-)^n\le4\delta(1-\delta)$. Taking logarithms and applying Lemma~\ref{lem:twopoint}, we then obtain
\begin{equation}\label{eq:binarycopylower}
 n\ge\frac{\log(1/[4\delta(1-\delta)])}
                 {-\log F(\sigma_+,\sigma_-)}
 \ge\frac{\log(1/[4\delta(1-\delta)])}{128\varepsilon^2}.
\end{equation}
The same testing argument excludes $n=0$.

We next compare the numerator with $\log(1/\delta)$. If $\delta\le1/8$, then
\[
 \log\frac1{4\delta(1-\delta)}
 \ge\log\frac1{4\delta}
 \ge\frac13\log\frac1\delta.
\]
If $1/8\le\delta\le1/3$, then $4\delta(1-\delta)\le8/9$ and $\log(1/\delta)\le\log8$. Since $\log(9/8)>\log8/20$, the two cases give
\begin{equation}\label{eq:logcomparison}
 \log\frac1{4\delta(1-\delta)}\ge\frac1{20}\log\frac1\delta.
\end{equation}
Combining~\eqref{eq:binarycopylower} and~\eqref{eq:logcomparison}, we conclude that every successful procedure satisfies
\begin{equation}\label{eq:loglower}
 n\ge\frac{\log(1/\delta)}{2560\varepsilon^2}
\end{equation}
in the accuracy range of Lemma~\ref{lem:twopoint}.

\subsection{The confidence-radius lower bound}

Fix $n\ge1$ and $0<\delta\le1/3$, and put $\ell=\log(1/\delta)$. We first show that
\begin{equation}\label{eq:confidenceonlylower}
 \cQ_n(\rho,R,\delta)\ge\frac1{100}
       \min\{R,\sqrt{\ell/n}\}.
\end{equation}
Let $\varepsilon_*=(1/100)\min\{R,\sqrt{\ell/n}\}$. Since $R\le1$, we have $\varepsilon_*\le R/4$ and $\varepsilon_*\le1/100<1/64$, so Lemma~\ref{lem:twopoint} applies at this accuracy. If an $n$-copy procedure had failure probability at most $\delta$ throughout the ball at threshold $\varepsilon_*$, then~\eqref{eq:loglower} would give
\[
 n\ge\frac{\ell}{2560\varepsilon_*^2}
 \ge\frac{10000}{2560}n>n,
\]
where the second inequality uses $\varepsilon_*^2\le\ell/(10000n)$. This contradiction shows that $\varepsilon_*$, and every smaller threshold, is infeasible for every procedure. Taking the infima in the definition of $\cQ_n$ proves~\eqref{eq:confidenceonlylower}; neither infimum needs to be attained.

Since $\delta\le1/3$,~\eqref{eq:stronglower} also gives
\[
 \cQ_n(\rho,R,\delta)
 \ge10^{-6}\min\{R,\sqrt{\Gamma/n}\}.
\]
For nonnegative $x,y$, the elementary inequality
\[
 \min\{R,\sqrt{x+y}\}
 \le\sqrt2\max\{\min\{R,\sqrt x\},\min\{R,\sqrt y\}\}
\]
then allows us to combine the last bound with~\eqref{eq:confidenceonlylower}. We obtain
\begin{equation}\label{eq:fullconfidencelower}
 \cQ_n(\rho,R,\delta)\ge\frac{10^{-6}}{\sqrt2}
 \min\left\{R,\sqrt{\frac{\Gamma+\log(1/\delta)}n}\right\}.
\end{equation}
Together with~\eqref{eq:fullconfidenceupper} and the expected-risk bounds, this proves Theorem~\ref{thm:main}.

\subsection{Copy complexity}

We now prove Corollary~\ref{thm:confidence}. Write $\Gamma_\delta=\Gamma+\log(1/\delta)$ and let $c_{\rm Q}=10^{-6}/\sqrt2$, the constant in~\eqref{eq:fullconfidencelower}. Choose a universal $c_0$ with $0<c_0<c_{\rm Q}$, and suppose $\varepsilon\le c_0R$. Since this assumption implies the accuracy restrictions in Lemma~\ref{lem:twopoint}, the two-point argument excludes $n=0$.

If an $n$-copy procedure succeeds uniformly at accuracy $\varepsilon$ and confidence $1-\delta$, then $\cQ_n(\rho,R,\delta)\le\varepsilon$. From~\eqref{eq:fullconfidencelower}, it follows that
\[
 c_{\rm Q}\min\{R,\sqrt{\Gamma_\delta/n}\}\le\varepsilon<c_{\rm Q}R.
\]
The minimum must therefore equal $\sqrt{\Gamma_\delta/n}$, which gives $n\ge c_{\rm Q}^2\Gamma_\delta/\varepsilon^2$. Conversely, the concrete procedure proving~\eqref{eq:fullconfidenceupper} has error at most $16\sqrt{\Gamma_\delta/n}$ with failure probability at most $\delta$. Taking $n=\lceil256\Gamma_\delta/\varepsilon^2\rceil$ proves the upper bound in~\eqref{eq:copy}.

The small value allowed for $c_0$ comes from inverting bounds with conservative universal constants. It does not identify a statistical transition at $\varepsilon=c_0R$. The confidence-radius characterization~\eqref{eq:fullconfidence} itself holds for every $n\ge1$ and every $0<\delta\le1/3$.

\begin{remark}[A classical boundary model]\label{rem:classicalconfidence}
The confidence cost behaves differently in a commuting boundary model. Consider Bernoulli observations with parameter $p\in[0,R]$, where $0<R\le1/2$. If $0<\varepsilon\le R/8$ and $0<\delta\le1/3$, the sample complexity for absolute error at most $\varepsilon$ and failure at most $\delta$ is
\[
 \asymp \frac{R\log(1/\delta)}{\varepsilon^2}.
\]
To prove the upper bound, let $\widehat p$ be the empirical mean. Since $p(1-p)\le R$, Bernstein's inequality gives
\[
 \Prb_p\{|\widehat p-p|>\varepsilon\}
 \le2\exp\{-n\varepsilon^2/[2(R+\varepsilon/3)]\}.
\]
The assumptions $\varepsilon\le R/8$ and $\delta\le1/3$ then give the claimed upper bound.

For the lower bound, put $v=R/2$, $u=2\varepsilon$, and $p_\pm=v\pm u$. Since $u\le R/4$, both parameters lie in $[0,R]$, and their separation is $2u=4\varepsilon$. As in~\eqref{eq:diagonalfidelity}, their squared classical fidelity satisfies
\[
 \sqrt F=\sqrt{v^2-u^2}+\sqrt{(1-v)^2-u^2}
 \ge1-\frac{u^2}{v(1-v)}
 \ge1-\frac{32\varepsilon^2}{3R},
\]
where the last inequality follows from $v(1-v)\ge3R/8$. Squaring the nonnegative lower bound and using $(1-x)^2\ge1-2x$, we obtain $F\ge1-22\varepsilon^2/R$. Since $22\varepsilon^2/R\le11/64<1/2$, we then have $-\log F\le44\varepsilon^2/R$. Applying the testing argument leading to~\eqref{eq:binarycopylower} and~\eqref{eq:logcomparison} proves the lower bound. By comparison, the coherent rotations in Lemma~\ref{lem:twopoint} have fidelity loss of order $\varepsilon^2$, without division by $R$, and therefore yield the quantum confidence cost without a radius prefactor.
\end{remark}

\section{Conclusion and future work}\label{sec:conclusion}

The local minimax risk is governed by $d(\Tr\sqrt\rho)^2+d^2R$, and the confidence radius adds $\log(1/\delta)$ to this coefficient. The bounds hold uniformly over finite dimensions, arbitrary center spectra, and all sample sizes. The matching converse combines square-root perturbations with a weighted trace-norm witness; the upper bound uses an existing universal measurement and a confidence-dependent choice between that estimator and the known center. Equation~\eqref{eq:savingsconfidence} identifies the resulting sample savings, including the regime in which confidence cost dominates dimension.

Several questions remain. First, what is the corresponding center-dependent rate under single-copy or bounded-batch measurements? Existing rank-dependent results~\cite{NZ} show that restricting joint measurements changes global complexity, but do not give the full local spectral dependence. Second, can the same local and confidence guarantees be attained with controlled running time and quantum memory? The streaming methods of Hu \emph{et al.}~\cite{Hu} provide relevant implementation tools.

Two statistical questions also remain open. A procedure without a rank bound may improve the finite-sample remainder while retaining the leading spectral term $\sqrt{d/n}\,S(\sigma)$; the ancilla-dependent bias of the present estimator does not exclude this possibility. Finally, the fixed-rank rate at normalized-projector centers with $r\le d/2$ is implied by existing results (Corollary~\ref{cor:fixedrank}), but the corresponding uniform characterization at arbitrary nonuniform fixed-rank centers remains unresolved.
\section{Declarations of AI use}
Chat GPT 5.6 Sol was used to explore the proof strategy. Chat GPT 6 Astra was used to draft the paper and refine the mathematical arguments. The proofs were independently checked and approved by the author, who takes full responsibility for the scientific content of the paper.
\appendix
\section{Gaussian concentration with the constants used above}\label{app:gaussian}

Let $Z$ be standard Gaussian in $\mathbb R^q$, and let $g$ be
$L_g$-Lipschitz with $L_g>0$. Then
\begin{equation}\label{eq:gaussconc}
 \Prb\{g(Z)\ge\E g(Z)+u\}\le\exp\{-u^2/(2L_g^2)\},
 \qquad u\ge0.
\end{equation}
For $L_g=1$ this is~\cite[Equation (1.22)]{Ledoux}, and the general
case follows by applying it to $g/L_g$. It also follows from the
Gaussian logarithmic-Sobolev
inequality~\cite[Section 2.2, Equation (2.15)]{Ledoux} by the Herbst
argument~\cite[Section 2.3, Corollary 2.6]{Ledoux}. In Section~\ref{sec:upper}, $L_g=\sqrt2$ and
$u=2\sqrt t$, so the failure probability is $e^{-t}$, as used
in~\eqref{eq:fconcentration}.

\section{Eigenvalue variation along a Hermitian matrix path}\label{app:eigen}

The Lidskii--Mirsky inequality used in~\eqref{eq:Sloc} follows from
standard Hermitian perturbation theory~\cite{Bhatia}. We give its short
matrix-path proof to explain why repeated eigenvalues require no
additional assumption.

Put $B=\sigma-\rho$ and $H(t)=\rho+tB$ for $0\le t\le1$, and let
$\lambda_i(t)$ be the ordered eigenvalues of $H(t)$. Since the min--max
principle gives
$|\lambda_i(t)-\lambda_i(s)|\le|t-s|\norm B_{\op}$, these functions
are absolutely continuous and simultaneously differentiable for
almost every $t$. At such a point, their derivatives within an
eigenspace with spectral projection $P$ are the eigenvalues of
$PBP$ restricted to that eigenspace. Consequently, choosing an
orthonormal eigenbasis of each compression gives an orthonormal basis
$(u_i)$ such that $\lambda_i'(t)=\ip{u_i}{Bu_i}$.

Since $|\ip u{Bu}|\le\ip u{|B|u}$ for every unit vector $u$, we then
have
\[
 \sum_i|\lambda_i'(t)|
 \le\sum_i\ip{u_i}{|B|u_i}=\Tr|B|
 =\norm{\sigma-\rho}_1
\]
for almost every $t$. Absolute continuity now gives
\[
 \sum_i|\lambda_i(\sigma)-\lambda_i(\rho)|
 \le\int_0^1\sum_i|\lambda_i'(t)|\,dt
 \le\norm{\sigma-\rho}_1.
\]
The argument uses the compressed perturbation $PBP$ at a repeated
eigenvalue; it does not require differentiable eigenvectors.

\section{The fixed-rank rate at a normalized projector}\label{app:rank}

We record the local minimax rate at a normalized projector under a rank
constraint. This rate is implied by existing results, as explained at
the end of this appendix. The sign family of Section~\ref{sec:lower},
applied to the projector without regularization, also gives the lower
bound directly and with explicit constants. Its members have rank
exactly $r$, but their nonzero eigenvalues need not be equal, so the
construction does not impose the stronger normalized-projector promise
in~\cite{SSW}.

\begin{proposition}[Rank-preserving sign family]\label{prop:rankpreserving}
Let $d\ge2$, $1\le r\le d/2$, $0<R\le1$, and let $P_r$ be a rank-$r$
orthogonal projector. For every procedure using $n\ge1$ copies, there is a
rank-$r$ state $\sigma$ satisfying $T(\sigma,P_r/r)\le R$ such that
\[
 \Prb_\sigma\left\{T(\widehat\sigma,\sigma)
 >10^{-5}\min\{R,\sqrt{rd/n}\}\right\}>0.43.
\]
\end{proposition}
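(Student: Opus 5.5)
We rerun the sign-family argument of Section~\ref{sec:lower}, now centered at $\rho_0=P_r/r$ itself with no regularization. Choose a basis so that $P_r$ projects onto the first $r$ coordinates, and split $\mathbb C^d=\mathbb C^r\oplus\mathbb C^{d-r}$, so $k=r$ and $m=d-r\ge d/2$. Set $D=\diag(r^{-1/2}I_r,0)$, $E_\Xi=\begin{pmatrix}0&0\\\Xi&0\end{pmatrix}$ with $\Xi\in\{\pm1\}^{m\times r}$, and $A_\Xi=D+aE_\Xi$. The matrix $A_\Xi$ has nonzero columns only among the first $r$, so $\sigma_\Xi=A_\Xi A_\Xi^\dagger/h$ has rank exactly $r$. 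The normalization is $h=1+a^2rm$ as before. Here $s_0=\sum_{i\le r}r^{-1/2}=\sqrt r$ and $\ell_0=\sqrt{m r}\asymp\sqrt{rd}$. We take
\[
 a=\tfrac1{100}\min\{n^{-1/2},R/\ell_0\}.
\]

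\emph{Membership.} The curvature control~\eqref{eq:curvature} now comes from the uniform spectrum rather than from a spectral floor. Since $k/s_0^2=1$, we get $b=a^2rm=(a\ell_0)^2\le 10^{-4}R^2\le10^{-4}R$, so $1\le h\le 1.0001$. The row bound~\eqref{eq:detrow} gives trace norm at most $2a\ell_0$ for the linear term. Exactly as before,
\[
 T(\sigma_\Xi,P_r/r)\le(a\ell_0+b)/h\le 0.0101R\le R .
\]
No triangle inequality through a regularized center is needed, which is why the constants improve.

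\emph{Indistinguishability and decoding.} These steps are unchanged. Neighboring purifications have overlap $1-2a^2/h$, so~\eqref{eq:neighborTV} gives $n$-copy trace distance at most $1/50$. The weighted Assouad bound~\eqref{eq:Wprob} then holds with the uniform weights $\sqrt{\mu_i}=r^{-1/2}$. The dual witness of Section~\ref{sec:dualwitness}, Lemma~\ref{lem:block} and~\eqref{eq:weighteddual}, applies verbatim to the off-diagonal block $(a/h)D_r(\Xi-\Upsilon)^T$. The operator-norm bound~\eqref{eq:signnorm} needs $m\ge k$, which holds since $r\le d/2$. On the joint event of probability greater than $0.43$,
\[
 T(\widehat\sigma,\sigma_\Xi)\ge a\ell_0/(100h)\ge a\ell_0/101 .
\]

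\emph{Rate.} Using $\ell_0\ge\sqrt{rd/2}$, we have $a\ell_0\ge\frac1{100\sqrt2}\min\{R,\sqrt{rd/n}\}$. The loss therefore exceeds $10^{-5}\min\{R,\sqrt{rd/n}\}$ with prior-averaged probability greater than $0.43$, so some vertex attains this. Strictness comes from $1/(101\cdot100\sqrt2)>10^{-5}$.

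The main obstacle I expect is the constant bookkeeping. The decoding step chains two factors of $1/10$, from the threshold $N_{\mu,\max}/10$ and from the operator-norm event $10\sqrt m$, then a factor $1/100$ in $a$ and $\sqrt2$ from $m\ge d/2$. The product must stay above $10^{-5}$. If it falls short, I would sharpen~\eqref{eq:signnorm} using $\norm\Xi_{\op}\le C(\sqrt m+\sqrt r)\le 2C\sqrt m$ with a smaller $C$, or rebalance the Assouad threshold. Otherwise the argument is a direct specialization.
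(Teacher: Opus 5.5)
Your proposal is correct and follows the paper's proof essentially line for line. It uses the same unregularized sign family at $P_r/r$ with $k=r$ and $m=d-r$, the same bounds $b=(a\ell_0)^2\le10^{-4}R^2$ and $T(\sigma_\Xi,P_r/r)\le0.0101R$, and reuses the neighbor, Assouad, dual-witness, and sign-norm steps verbatim. The only difference is cosmetic: you use $h\le1.0001$ to get a loss of at least $a\ell_0/101$, where the paper uses $h<2$ to get $a\ell_0/200$. Either constant clears $10^{-5}$, so the fallback you describe in your last paragraph is unnecessary.
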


\begin{proof}
Choose a basis in which $P_r=\diag(I_r,0)$. We repeat the construction of
Section~\ref{sec:lower} with $\rho_0$ replaced by $P_r/r$ itself and with
$k=r$ and $m=d-r$; that is, $\mu_i=1/r$ for $i\le r$ and $\mu_i=0$
otherwise. Then $k\le m$, and~\eqref{eq:L} gives $s_0=\sqrt r$ and
$\ell_0=\sqrt{rm}$. The nonzero columns of $A_\Xi=D+aE_\Xi$
in~\eqref{eq:cube} form the matrix with blocks $I_r/\sqrt r$ and
$a\Xi$, so every $\sigma_\Xi$ has rank exactly $r$.

The regularization in Section~\ref{sec:lower} was used only
for~\eqref{eq:Llower} and~\eqref{eq:bsmall}. Here these bounds are
replaced by $\ell_0^2=rm\ge rd/2$ and
\[
 b=a^2rm=(a\ell_0)^2\le10^{-4}R^2,\qquad 1\le h\le1.0001.
\]
No regularization is needed, so the estimate following~\eqref{eq:detrow}
becomes
\[
 T(\sigma_\Xi,P_r/r)\le\frac{a\ell_0+b}{h}\le0.0101R,
\]
and the whole family lies in the prescribed ball. The remaining steps
use only $a\le1/(100\sqrt n)$, $1\le h<2$, $k\le m$, and the
nonnegativity of the weights $\sqrt{\mu_i}$. They therefore apply verbatim: the neighbor
bound~\eqref{eq:neighborTV}, the weighted Assouad
bound~\eqref{eq:Wprob}, the dual-witness bound~\eqref{eq:lossW}, and the
sign-matrix bound~\eqref{eq:signnorm}. As in Section~\ref{sec:lower},
with prior-averaged probability greater than $0.43$ we obtain
\[
 T(\widehat\sigma,\sigma_\Xi)\ge\frac{a\ell_0}{200}
 =\frac1{20000}\min\{R,\sqrt{rm/n}\}
 \ge\frac1{20000\sqrt2}\min\{R,\sqrt{rd/n}\}.
\]
Since $1/(20000\sqrt2)>10^{-5}$, at least one vertex satisfies the
claim.
\end{proof}

\begin{corollary}[Local minimax rate at a normalized projector]
\label{cor:fixedrank}
Let $d\ge2$, $1\le r\le d/2$, $0<R\le1$, and let $n\ge1$ be an integer.
Let $P_r$ be a rank-$r$ orthogonal projector. Define
\[
 \cB^{=r}_R(P_r/r)
 =\{\sigma\in\cD_d:T(\sigma,P_r/r)\le R,
                         \ \operatorname{rank}\sigma=r\},
\]
\[
 \cB^{\le r}_R(P_r/r)
 =\{\sigma\in\cD_d:T(\sigma,P_r/r)\le R,
                         \ \operatorname{rank}\sigma\le r\}.
\]
For either class $\mathcal C$, define $\cR_n(\mathcal C)$ and
$\cQ_n(\mathcal C,\delta)$ as in Section~\ref{sec:model}, replacing
$\cB_R(\rho)$ by $\mathcal C$. The procedure may depend on $P_r,r,R,n$
and may output any density matrix. Then, uniformly in all parameters,
\[
 \cR_n(\mathcal C)\asymp\min\{R,\sqrt{rd/n}\},
 \qquad
 \cQ_n(\mathcal C,1/3)\asymp\min\{R,\sqrt{rd/n}\}.
\]
\end{corollary}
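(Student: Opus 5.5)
The plan is to prove the lower and upper bounds separately, with $q^{(r)}_n=\min\{R,\sqrt{rd/n}\}$. Since $\cB^{=r}_R(P_r/r)\subseteq\cB^{\le r}_R(P_r/r)$, every minimax quantity over the smaller class is at most the corresponding one over the larger class. It therefore suffices to prove the lower bounds for $\cB^{=r}$ and the upper bounds for $\cB^{\le r}$.

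\emph{Lower bound.} Proposition~\ref{prop:rankpreserving} handles this. For every procedure it produces a rank-$r$ state in the ball with $\Prb_\sigma\{T(\widehat\sigma,\sigma)>10^{-5}q^{(r)}_n\}>0.43>1/3$. Since the failure probability exceeds $1/3$ at threshold $10^{-5}q^{(r)}_n$, the infimum argument at the end of Section~\ref{sec:lower} applies verbatim and gives $\cQ_n(\cB^{=r},1/3)\ge10^{-5}q^{(r)}_n$. For the expected risk, we use the fact that the proposition was obtained from a prior-averaged event. The prior-averaged expected loss is therefore at least $0.43\cdot10^{-5}q^{(r)}_n$, and the supremum over the class dominates this average.

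\emph{Upper bound.} The procedure returns $P_r/r$ when $n\le rd/R^2$, incurring error at most $R$ deterministically. Otherwise it runs the purification/Hayashi procedure with ancilla dimension $q=r$ from Remark~\ref{rem:ancilladimension}, which is admissible because $\operatorname{rank}\sigma\le r$ is known. For the expected risk, \eqref{eq:knownrankexpected} with $S(\sigma)\le\sqrt r$ gives $\E T\le\sqrt{rd/n}+rd/n$. In the regime $n>rd/R^2\ge rd$, the remainder satisfies $rd/n\le\sqrt{rd/n}$, so the expected risk is at most $2q^{(r)}_n$. For constant confidence, the simplest route is Markov's inequality: $\Prb\{T>6q^{(r)}_n\}\le1/3$. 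This already gives $\cQ_n(\mathcal C,1/3)\le6q^{(r)}_n$, so no tail bound with $q=r$ is needed.

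The main point to check is that the analysis of Section~\ref{sec:upper} transfers to rectangular $d\times r$ purifications. This covers the exact Gaussian law \eqref{eq:exactV} with $Q\sim\GammaDist(dr-1,1)$ and the random-purification channel with ancilla dimension $r$ from~\cite{TWZ}, which requires only $\operatorname{rank}\sigma\le r$. Remark~\ref{rem:ancilladimension} asserts both facts, so I will cite it directly. A second small check is the output: the estimator returns $VV^\dagger$, which is a density matrix and so meets the requirement that the procedure output any density matrix. Alternatively, the upper bound can be cited from the known-rank estimators of~\cite{OW,PSTW}, localized in the same way.
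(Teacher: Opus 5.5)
Your proposal is correct and follows the paper's proof essentially step for step. The lower bounds come from Proposition~\ref{prop:rankpreserving} via the class inclusion, using $0.43>1/3$ for the confidence radius and a factor $0.43$ for the expected risk. The upper bounds come from returning $P_r/r$ when $n\le rd/R^2$ and otherwise using the purification/Hayashi estimator with an $r$-dimensional ancilla, together with \eqref{eq:knownrankexpected}, $S(\sigma)\le\sqrt r$, and Markov's inequality at threshold $6\sqrt{rd/n}$.
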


\begin{proof}
Since $\cB^{=r}_R(P_r/r)\subseteq\cB^{\le r}_R(P_r/r)$,
Proposition~\ref{prop:rankpreserving} applies to either class. It gives
the confidence-radius lower bound because $0.43>1/3$, and the
expected-risk lower bound after multiplying the threshold by $0.43$.

For the upper bounds, return $P_r/r$ when $n\le rd/R^2$; both criteria
are then at most $R=\min\{R,\sqrt{rd/n}\}$. When $n>rd/R^2$, we have
$dr/n<R^2\le1$, and the target rate is $\sqrt{dr/n}$. Use the
purification/Hayashi procedure with an $r$-dimensional ancilla, which is
admissible because the rank bound is known.
By~\eqref{eq:knownrankexpected} and $S(\sigma)\le\sqrt r$, its expected
error is at most $\sqrt{dr/n}+dr/n\le2\sqrt{dr/n}$. Markov's inequality
then bounds the probability of an error exceeding $6\sqrt{dr/n}$ by
$1/3$.
\end{proof}

\paragraph{Global consequence.}
Let $1\le r\le d$ and apply Proposition~\ref{prop:rankpreserving} with
rank $r_0=\min\{r,\lfloor d/2\rfloor\}$ and $R=1$. Since $r_0\ge r/3$,
every procedure errs by more than
$5\cdot10^{-6}\min\{1,\sqrt{rd/n}\}$ with probability greater than
$0.43$ at some state of rank at most $r$. This is an alternative proof
of the known global $\Omega(rd/\varepsilon^2)$ lower bound for
rank-at-most-$r$ tomography~\cite{SSW,NZ}.

\paragraph{Relation to existing results.}
Corollary~\ref{cor:fixedrank} is implied by earlier work, up to the
values of the constants. For the lower bound, the proof
of~\cite[Proposition 4.13]{NZ} treats $r\le d/2$ and uses the tomography
guarantee only on the states $\rho_X=\Pi_X/r$, where $\Pi_X$ projects
onto the graph of a $(d-r)\times r$ matrix $X$ with
$\norm X_{\op}\le L\eta$. Here $\eta$ is the accuracy in trace norm and
$L$ is a universal constant. If $s_1,\ldots,s_r$ are the singular values
of $X$, the principal angles between the graph and the range of $P_r$
have tangents $s_i$, and hence
\[
 T(\rho_X,P_r/r)=\frac1r\sum_{i=1}^r\frac{s_i}{\sqrt{1+s_i^2}}
 \le\norm X_{\op}\le L\eta.
\]
This inequality is not stated in~\cite{NZ}. It places their hard family
in $\cB^{=r}_R(P_r/r)$ whenever $L\eta\le R$. Their argument allows a
joint measurement on all $n$ copies and shows that, for $\eta$ below a
universal constant, success probability $2/3$ at accuracy $\eta$ on this
family requires $n\ge c_{\rm NZ}rd/\eta^2$. When $\eta$ is a
sufficiently small universal multiple of $\min\{R,\sqrt{rd/n}\}$, this
requirement fails. Some state of the family then has trace-norm error
above $\eta$ with probability above $1/3$, which gives the lower bounds
of the corollary. For the upper bound, the display at the start
of~\cite[Section 4]{OW}, based on Theorem 1.2 there, gives expected
trace distance $O(\sqrt{rd/n})$ for states of rank at most $r$. Proposition~\ref{prop:rankpreserving} adds a discrete family
and explicit constants.

At the same center, Theorem~\ref{thm:main} gives the unrestricted-ball
rate $\min\{R,\sqrt{(dr+d^2R)/n}\}$. In comparison with
Corollary~\ref{cor:fixedrank}, the additional $d^2R$ term records the
cost of allowing the rank to increase. The corollary concerns normalized
projectors; it does not determine the local minimax rate around an
arbitrary nonuniform fixed-rank center.

\end{document}